\newtheorem{theorem}{Theorem}[section]
\newtheorem{proposition}[theorem]{Proposition}
\newtheorem{lemma}[theorem]{Lemma}
\theoremstyle{definition}
\theoremstyle{remark}
\numberwithin{equation}{section}
\newcommand{\ElseIf}[1]{\State\hspace{-1.5em}Else if\ #1\ \hspace*{-0.333em}:}
\newcommand{\floor}[1]{\left\lfloor #1\right\rfloor}
\newcommand{\ceil}[1]{\left\lceil #1\right\rceil}
\newcommand{\N}{\mathbb{N}}
\newcommand{\Z}{\mathbb{Z}}
\newcommand{\F}{\mathbb{F}}
\newcommand{\LeftFunction}{PrepareLeft}
\newcommand{\RightFunction}{PrepareRight}
\newcommand{\A}{A}
\newcommand{\M}{M}
\DeclareMathOperator{\mult}{\mathsf{M}}
\DeclareMathOperator{\bigO}{\mathcal{O}}
\newcommand*{\bdiv}{%
	\nonscript\mskip-\medmuskip\mkern5mu%
	\mathbin{\operator@font div}\penalty900\mkern5mu%
	\nonscript\mskip-\medmuskip
}
\newcommand*{\bmodstar}{%
	\nonscript\mskip-\medmuskip\mkern5mu%
	\mathbin{\operator@font mod^\ast}\penalty900\mkern5mu%
	\nonscript\mskip-\medmuskip
}
\let\originalleft\left
\let\originalright\right
\renewcommand{\left}{\mathopen{}\mathclose\bgroup\originalleft}
\renewcommand{\right}{\aftergroup\egroup\originalright}
\begin{document}
%
%-----------------------------------------
% Title, authors, abstract, etc.
%-----------------------------------------
%
\title[Fast Hermite interpolation and evaluation over characteristic two]{Fast 
Hermite interpolation and evaluation over finite fields of characteristic two}
\author[Nicholas Coxon]{Nicholas Coxon}
\address{INRIA and Laboratoire d'Informatique de l'\'{E}cole polytechnique, Palaiseau, France.}
\email{nicholas.coxon@inria.fr}
\date{\today}
\keywords{Hermite interpolation, Hermite evaluation, multiplicity codes}
\thanks{This work was supported by Nokia in the framework of the common laboratory between Nokia Bell Labs and INRIA}
\begin{abstract} This paper presents new fast algorithms for Hermite 
interpolation and evaluation over finite fields of characteristic two. The 
algorithms reduce the Hermite problems to instances of the standard multipoint 
interpolation and evaluation problems, which are then solved by existing fast 
algorithms. The reductions are simple to implement and free of multiplications, 
allowing low overall multiplicative complexities to be obtained. The algorithms 
are suitable for use in encoding and decoding algorithms for multiplicity codes.
\end{abstract}
\maketitle
%
%-----------------------------------------
% Body
%-----------------------------------------

\section{Introduction}\label{sec:introduction}

Hermite interpolation is the problem of computing the coefficients of a polynomial given the values of its derivatives up to a given order at one or more evaluation points. The inverse problem, that of evaluating the derivatives of the polynomial when given its coefficients, is sometimes referred to as Hermite evaluation. Over fields of positive characteristic~$p$, the $i$th formal derivative vanishes identically for $i\geq p$. Consequently, it usual to consider Hermite interpolation and evaluation with respect to the Hasse derivative over fields of small positive characteristic.

For now, let $\F$ simply denote a field. Then, for $i\in\N$, the map $D^i:\F[x]\rightarrow\F[x]$ that sends $F\in\F[x]$ to the coefficient of $y^i$ in $F(x+y)\in\F[x][y]$ is called the $i$th Hasse derivative on $\F[x]$. For distinct evaluation points $\omega_0,\dotsc,\omega_{n-1}\in\F$ and positive integer multiplicities $\ell_0,\dotsc,\ell_{n-1}$, the Hermite interpolation problem over~$\F$ asks that we compute the coefficients of a polynomial $F\in\F[x]$ of degree less than $\ell=\ell_0+\dotsb+\ell_{n-1}$ when given $(D^iF)(\omega_j)$ for $j\in\{0,\dotsc,\ell_i-1\}$ and $i\in\{0,\dotsc,n-1\}$. The corresponding instance of the Hermite evaluation problem asks that we use the coefficients of $F$ to compute the $\ell$ derivatives of the interpolation problem. Different versions of the problems specify different bases on which the polynomials are required to be represented. In this paper, the problems are considered with respect to the monomial basis $\{1,x,x^2,\dotsc\}$ of $\F[x]$ only.

The boundary case $\ell_0=\dotsb=\ell_{n-1}=1$ 
corresponds to standard multipoint interpolation and evaluation, allowing both 
problems to be solved with $\bigO(\mult(\ell)\log \ell)$ operations in~$\F$ by 
the use of remainder trees and fast Chinese remainder 
algorithms~\cite{fiduccia1972,moenck1972,borodin1974,bostan2003,bostan2004,bernstein2004,hoeven2016}
 (see also~\cite[Chapter~10]{gathen2013}). Here, $\mult(\ell)$ denotes the 
number of operations required to multiply two polynomials in $\F[x]$ of degree 
less than $\ell$, which may be taken to be in $\bigO(\ell(\log 
\ell)\log\log\ell)$~\cite{schonhage1971,schonhage1976,cantor1991}. The 
complexity of solving the standard interpolation and evaluation problems 
reduces to $\bigO(\mult(\ell))$ operations when the evaluation points form a 
geometric progression~\cite{bostan2005}. Similarly, fast Fourier transform 
(FFT)~\cite{cooley1965} based interpolation and evaluation offer complexities 
as low as $\bigO(\ell\log\ell)$ operations on certain special sets of 
evaluation points.

For the opposing boundary case of $n=1$, the Hermite interpolation and evaluation problems both reduce to computing Taylor expansions. Indeed, it follows directly from the definition of Hasse derivatives that
\begin{equation}
	\label{eqn:hasse-taylor-expansion}
	F
	=\sum_{i\in\N}
	(D^iF)(\omega)
	\left(x-\omega\right)^i
	\quad
	\text{for $F\in\F[x]$ and $\omega\in\F$}.
\end{equation}
Consequently, Hermite interpolation and evaluation at a single evaluation point can be performed in $\bigO(\mult(\ell)\log\ell)$ operations in general~\cite{borodin1974,gathen1990,gathen1997}, $\bigO(\mult(\ell))$ operations if $(\ell-1)!$ is invertible in the field~\cite{aho1975,vari1974} (see also~\cite{gathen1997,bini1994}), and $\bigO(\ell\log\ell)$ operations if the field has characteristic equal to two~\cite{gao2010}.

The first quasi-linear time algorithms for solving the general Hermite problems were proposed by Chin~\cite{chin1976}. Truncating the Taylor expansion~\eqref{eqn:hasse-taylor-expansion} after $i$ terms gives the residue of degree less than $i$ of $F$ modulo $(x-\omega)^i$. Based on this observation, Chin's evaluation algorithm begins by using a remainder tree to compute the residues of the input polynomial modulo $(x-\omega_i)^{\ell_i}$ for $i\in\{0,\dotsc,n-1\}$. The Taylor expansion of each residue at its corresponding evaluation point is then computed to obtain the truncated Taylor expansion of the input polynomial. The interpolation problem can be solved by reversing these steps, with the residues combined by a fast Chinese remainder algorithm. It follows that the general Hermite interpolation and evaluation problems may be solved with $\bigO(\mult(\ell)\log\ell)$ operations~\cite{chin1976,olshevsky2000} (see also~\cite{bini1994,pan2001}).

In this paper, we present new algorithms for Hermite interpolation and 
evaluation over finite fields of characteristic two. The algorithms require the 
set of evaluation points to equal the field itself, and their corresponding 
multiplicities to be balanced, with $\left|\ell_i-\ell_j\right|\leq 1$ for 
$i\neq j$. While not solving the general interpolation and evaluation problems 
over these fields, the algorithms are suitable for use in multivariate Hermite 
interpolation and evaluation algorithms~\cite{coxon2017}, encoding and decoding 
algorithms for multiplicity codes~\cite{kopparty2014,coxon2017} and the codes 
of Wu~\cite{wu2015}, and private information retrieval protocols based on these 
codes~\cite{woodruff2007,augot2014}.

Over a characteristic two finite field of order $q$, the restricted problems may be solved with $\bigO(\mult(\ell)\log q+\ell\log\ell/q)$ operations by existing algorithms. The algorithms presented in this paper yield the same complexities, but benefit by their simplicity and the low number of multiplications they perform. When $\ell$ is a multiple of $q$, as occurs in some encoding and decoding contexts, the Hermite interpolation algorithm presented here performs $\ell/q$ standard interpolations over the $q$ evaluation points, followed by $\bigO(\ell\log\ell/q)$ additions. The Hermite evaluation algorithm performs $\bigO(\ell\log\ell/q)$ additions, followed by $\ell/q$ standard evaluations over the $q$ points. Using the generic bound of $\bigO(\mult(q)\log q)$ operations for solving the standard problems leads to the above bound on solving the Hermite problems in this special case.

We are not prohibited from using faster FFT-based interpolation and evaluation algorithms to solve the standard problems when they are supported by the field. Moreover, we have the option of using ``additive FFT'' algorithms~\cite{cantor1989,mateer2008,gao2010,bernstein2013,bernstein2014,lin2014,lin2016,lin2016a,chen2017}, which are specific to characteristic two finite fields and allow evaluation and interpolation over the $q$ points of the field to be performed with $\bigO(q\log^2q)$ or $\bigO(q(\log q)\log\log q)$ additions, depending on the degree of the field, and $\bigO(q\log q)$ multiplications. With these algorithms and $\ell$ a multiple of $q$, the Hermite interpolation and evaluation algorithms perform $\bigO(\ell(\log^2q+\log\ell/q))$ or $\bigO(\ell((\log q)\log\log q+\log\ell/q))$ additions, and only $\bigO(\ell\log q)$ multiplications.

When $\ell$ is not a multiple of $q$, the Hermite interpolation and evaluation algorithms still perform $\ceil{\ell/q}-1$ standard interpolations or evaluations over the $q$ evaluation points, but each must also solve one instance of a slightly generalised version of the corresponding standard problem. However, these more general problems reduce to the standard problems at the cost of $\bigO(\mult(q))$ operations for performing one division with remainder of polynomials of degree less than $q$ (see~\cite[Section~9.1]{gathen2013}). Consequently, the algorithms retain their simplicity in this case.

The reduction from Hermite to standard problems is provided in Section~\ref{sec:setup}, where we develop divide-and-conquer algorithms for solving the Hermite interpolation and evaluation problems when $\ell/q$ is a power of two. The problems for arbitrary $\ell$ can be reduced to this special case by zero padding. However, this approach almost doubles the size of the initial problem when $\ell/q$ is slightly larger than a power of two, leading to large jumps in complexity. Instead, in Sections~\ref{sec:evaluation} and~\ref{sec:interpolation} we address the problems for arbitrary $\ell$ by transferring across ideas from pruned and truncated FFT algorithms~\cite{markel1971,sorensen1993,hoeven2004,harvey2009,harvey2010,larrieu2017}, which are used to smooth similar unwanted jumps in the complexities of FFT-based evaluation and interpolation schemes. We are consequently able to solve the Hermite interpolation and evaluation problems with better complexity than obtained by zero padding.

\section{Properties of Hasse derivatives}\label{sec:hasse}

We begin by recalling some basic properties of Hasse derivatives.

\begin{lemma}\label{lem:hasse} Let $F,G\in\F[x]$, $\alpha,\beta,\omega\in\F$ and $i\in\N$. Then
\begin{enumerate}
	\item\label{hasse:linearity} $D^i(\alpha F+\beta G)=\alpha(D^iF)+\beta(D^iG)$,
	\item\label{hasse:taylor-coefficient} $(D^iF)(\omega)$ is equal to the coefficient of $x^i$ in $F(x+\omega)$,
	\item\label{hasse:multiplicity} $(D^jF)(\omega)=0$ for $j\in\{0,\dotsc,i-1\}$ if and only if $(x-\omega)^i$ divides $F$,
	\item\label{hasse:monomial} $D^ix^k=\binom{k}{i}x^{k-i}$ for $k\in\N$, and
	\item\label{hasse:composition} $D^i\circ D^j=\binom{i+j}{i}D^{i+j}$ for $j\in\N$.
\end{enumerate}
\end{lemma}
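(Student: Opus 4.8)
The plan is to dispatch the two statements that are immediate from the definition first, then bootstrap the remaining three from them. Parts~(\ref{hasse:linearity}) and~(\ref{hasse:monomial}) require no machinery: since $(\alpha F+\beta G)(x+y)=\alpha F(x+y)+\beta G(x+y)$ in $\F[x][y]$ and extracting the coefficient of $y^i$ is $\F$-linear, the linearity statement~(\ref{hasse:linearity}) follows at once; and the binomial theorem gives $(x+y)^k=\sum_{j}\binom{k}{j}x^{k-j}y^j$, whose coefficient of $y^i$ is exactly $\binom{k}{i}x^{k-i}$, proving~(\ref{hasse:monomial}) (with the convention that $\binom{k}{i}=0$ when $i>k$).

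For parts~(\ref{hasse:taylor-coefficient}) and~(\ref{hasse:multiplicity}) I would work from the single identity $F(x+y)=\sum_{i\in\N}(D^iF)\,y^i$ in $\F[x][y]$, which is just a restatement of the definition. Applying the evaluation homomorphism $x\mapsto\omega$ turns this into $F(\omega+y)=\sum_{i\in\N}(D^iF)(\omega)\,y^i$ in $\F[y]$; reading off the coefficient of $y^i$ and renaming the variable $y$ to $x$ yields~(\ref{hasse:taylor-coefficient}). Part~(\ref{hasse:multiplicity}) then follows by a shift: by~(\ref{hasse:taylor-coefficient}) the condition $(D^jF)(\omega)=0$ for all $j<i$ says precisely that $F(x+\omega)$ has no terms of degree below $i$, that is, $x^i\mid F(x+\omega)$, and substituting $x\mapsto x-\omega$ converts this into $(x-\omega)^i\mid F$.

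Part~(\ref{hasse:composition}) is where the actual work lies. By linearity~(\ref{hasse:linearity}) it suffices to verify $D^i\circ D^j=\binom{i+j}{i}D^{i+j}$ on each monomial $x^k$. Two applications of~(\ref{hasse:monomial}) give $D^i(D^jx^k)=\binom{k}{j}\binom{k-j}{i}x^{k-i-j}$, while the right-hand side evaluates to $\binom{i+j}{i}\binom{k}{i+j}x^{k-i-j}$, so the claim reduces to the integer identity $\binom{k}{j}\binom{k-j}{i}=\binom{i+j}{i}\binom{k}{i+j}$, both sides of which count ways to choose nested subsets and equal $k!/(i!\,j!\,(k-i-j)!)$. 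The only point to be careful about is that this is an identity of integers, so it remains valid after reduction modulo the characteristic; the binomial coefficients in the statement are to be read as their images in $\F$. I expect this final binomial bookkeeping, together with keeping the variable substitutions in~(\ref{hasse:taylor-coefficient}) and~(\ref{hasse:multiplicity}) straight, to be the only places where any care is needed.
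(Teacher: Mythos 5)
Your proof is correct and takes essentially the same route as the paper's sketch: properties \eqref{hasse:linearity}, \eqref{hasse:taylor-coefficient} and \eqref{hasse:monomial} directly from the definition (with the binomial theorem for \eqref{hasse:monomial}), property \eqref{hasse:multiplicity} deduced from \eqref{hasse:taylor-coefficient} via the substitution $x\mapsto x-\omega$, and property \eqref{hasse:composition} from \eqref{hasse:linearity} and \eqref{hasse:monomial} together with the integer identity $\binom{k-j}{i}\binom{k}{j}=\binom{i+j}{i}\binom{k}{i+j}$, which is exactly the identity the paper invokes. You merely fill in details the paper leaves implicit, including the worthwhile observation that the identity holds over $\Z$ and hence after reduction modulo the characteristic.
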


Properties~\eqref{hasse:linearity} and~\eqref{hasse:taylor-coefficient} of Lemma~\ref{lem:hasse} follow readily from the definition of Hasse derivatives provided in the introduction. Property~\eqref{hasse:multiplicity} follows from Property~\eqref{hasse:taylor-coefficient}.
Property~\eqref{hasse:monomial} follows from the definition of Hasse derivatives and the binomial theorem. Property~\eqref{hasse:composition} follows from Properties~\eqref{hasse:linearity} and~\eqref{hasse:monomial}, and the binomial identity
\begin{equation*}
	\binom{k-j}{i}\binom{k}{j}
	=\binom{i+j}{i}\binom{k}{i+j}
	\quad\text{for $i,j,k\in\N$}.
\end{equation*}

For $\ell>0$, let $\F[x]_\ell$ denote the space of polynomials in $\F[x]$ that have degree less than $\ell$. Then existence and uniqueness for the general Hermite interpolation problem is provided by the following lemma.

\begin{lemma}\label{lem:existence-uniqueness} Let $\omega_0,\dotsc,\omega_{n-1}\in\F$ be distinct, $\ell_0,\dotsc,\ell_{n-1}$ be positive integers, and $\ell=\ell_0+\dotsb+\ell_{n-1}$. Then given elements $h_{i,j}\in\F$ for $i\in\{0,\dotsc,\ell_j-1\}$ and $j\in\{0,\dotsc,n-1\}$, there exists a unique polynomial $F\in\F[x]_\ell$ such that $(D^iF)(\omega_j)=h_{i,j}$ for $i\in\{0,\dotsc,\ell_j-1\}$ and $j\in\{0,\dotsc,n-1\}$.
\end{lemma}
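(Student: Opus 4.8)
The plan is to recast the problem as a statement about a single linear map between vector spaces of equal dimension, and then reduce existence and uniqueness to injectivity. First I would define the evaluation map $\Phi\colon\F[x]_\ell\to\F^\ell$ sending each $F$ to the tuple $\left((D^iF)(\omega_j)\right)$ indexed by $i\in\{0,\dotsc,\ell_j-1\}$ and $j\in\{0,\dotsc,n-1\}$. Since there are exactly $\ell_0+\dotsb+\ell_{n-1}=\ell$ such pairs $(i,j)$, the codomain $\F^\ell$ has dimension $\ell$, matching $\dim_\F\F[x]_\ell=\ell$. The map $\Phi$ is $\F$-linear by Lemma~\ref{lem:hasse}\eqref{hasse:linearity}, so the lemma is precisely the assertion that $\Phi$ is a bijection: surjectivity of $\Phi$ supplies a polynomial $F$ with $(D^iF)(\omega_j)=h_{i,j}$ for every prescribed $h_{i,j}$, and injectivity guarantees that this $F$ is unique.

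Because the source and target of $\Phi$ have the same finite dimension, it suffices to prove that $\Phi$ is injective; surjectivity, and hence existence together with uniqueness, then follows automatically from the rank--nullity theorem. So I would reduce to showing that $\ker\Phi=\{0\}$. Suppose $F\in\F[x]_\ell$ lies in the kernel, meaning $(D^iF)(\omega_j)=0$ for all admissible $i$ and $j$. For each fixed $j$, the conditions $(D^iF)(\omega_j)=0$ for $i\in\{0,\dotsc,\ell_j-1\}$ are exactly the hypotheses of Lemma~\ref{lem:hasse}\eqref{hasse:multiplicity}, which then gives that $(x-\omega_j)^{\ell_j}$ divides $F$.

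The remaining step is to combine these individual divisibilities into a single one. Since the $\omega_j$ are distinct, the polynomials $(x-\omega_j)^{\ell_j}$ are pairwise coprime, so their product $\prod_{j=0}^{n-1}(x-\omega_j)^{\ell_j}$ divides $F$ as well. This product has degree $\ell_0+\dotsb+\ell_{n-1}=\ell$, whereas $\deg F<\ell$; the only multiple of a degree-$\ell$ polynomial lying in $\F[x]_\ell$ is $0$, forcing $F=0$. This establishes injectivity and completes the argument. I do not anticipate any genuine obstacle here, since the statement is the classical existence and uniqueness result for Hermite interpolation; the only point requiring a little care is the passage from the separate divisibilities to the product divisibility, which rests on the coprimality guaranteed by the distinctness of the evaluation points.
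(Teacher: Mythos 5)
Your proposal is correct and follows essentially the same route as the paper: the paper's proof is exactly this argument---show the linear map $F\mapsto((D^iF)(\omega_j))_{0\leq i<\ell_j,\,0\leq j<n}$ from $\F[x]_\ell$ to $\F^\ell$ has trivial kernel via Property~\eqref{hasse:multiplicity} of Lemma~\ref{lem:hasse} and the degree-$\ell$ product $\prod^{n-1}_{j=0}(x-\omega_j)^{\ell_j}$, then conclude by equality of dimensions. Your write-up merely makes explicit the coprimality step that the paper leaves implicit.
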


Lemma~\ref{lem:existence-uniqueness} follows from Property~\eqref{hasse:multiplicity} of Lemma~\ref{lem:hasse}, which implies that the kernel of the linear map from $\F[x]_\ell$ to $\F^\ell$ given by $F\mapsto((D^iF)(\omega_j))_{0\leq i<\ell_j,0\leq j<n}$ can only contain multiples of the degree $\ell$ polynomial $\prod^{n-1}_{j=0}(x-\omega_j)^{\ell_j}$, and must therefore be trivial.

\section{Strategy over finite fields of characteristic two}\label{sec:setup}

Hereafter, we assume that $\F$ is finite of characteristic two. Let $q$ denote the order of the field, and enumerate its elements as $\omega_0,\dotsc,\omega_{q-1}$. Define $i\bdiv{j}=\floor{i/j}$ and $i\bmod{j}=i-\floor{i/j}j$ for $i,j\in\Z$ such that $j$ is nonzero. Then the Hermite interpolation problem we consider in the remainder of the paper can be stated as follows: given $(h_0,\dotsc,h_{\ell-1})\in\F^\ell$, compute the vector $(f_0,\dotsc,f_{\ell-1})\in\F^\ell$ such that $F=\sum^{\ell-1}_{i=0}f_ix^i$ satisfies $(D^{i\bdiv{q}}F)(\omega_{i\bmod{q}})=h_i$ for $i\in\{0,\dotsc,\ell-1\}$. The Hermite evaluation problem we consider is the inverse problem, asking that we compute the vector $((D^{i\bdiv{q}}F)(\omega_{i\bmod{q}}))_{0\leq i<\ell}$ when given the coefficient vector of $F\in\F[x]_\ell$.  We call $\ell$ the length of an instance of either problem, and observe that if $\ell\leq q$, then the problems reduce to standard multipoint interpolation and evaluation with evaluation points $\omega_0,\dotsc,\omega_{\ell-1}$.

In this section, we introduce the main elements of our algorithms by temporarily limiting our attention to instances of length $2^nq$ for some $n\in\N$. The algorithms take on their simplest form in this case, with each applying a simple reduction from the length $2^nq$ problem to two problems of length $2^{n-1}q$. Proceeding recursively, both algorithms ultimately reduce to problems of length $q$, which are then solved by existing standard interpolation and evaluation algorithms. The reductions employed by the algorithms are provided by the following lemma.

\begin{lemma}\label{lem:recursion} Let $n\in\N$ be nonzero, 
$F_0,F_1\in\F[x]_{2^{n-1}q}$ and
\begin{equation}\label{eqn:taylor-expansion}
	F=F_1(x^q-x)^{2^{n-1}}+F_0.
\end{equation}
Then for $\omega\in\F$ and $i\in\{0,\dotsc,2^n-1\}$,
\begin{equation*}\label{eqn:hasse-recursion}
	\left(D^iF\right)(\omega)
	=\begin{cases}
		\left(D^iF_0\right)(\omega) & \text{if $i<2^{n-1}$},\\
		\big(D^{i-2^{n-1}}(F_1 + D^{2^{n-1}}F_0)\big)(\omega)
		& \text{otherwise}.
	\end{cases}
\end{equation*}
\end{lemma}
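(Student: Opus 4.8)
The plan is to prove this by examining how the Hasse derivative $D^i$ acts on the decomposition $F = F_1(x^q-x)^{2^{n-1}} + F_0$, separately on each summand, and then evaluating at $\omega$. The key structural fact I would exploit is that over a field of characteristic two, the polynomial $x^q - x$ vanishes at every $\omega \in \F$ (since $\omega^q = \omega$ for all field elements), and moreover $x^q - x = \prod_{j=0}^{q-1}(x - \omega_j)$, so $(x^q-x)^{2^{n-1}}$ is divisible by $(x-\omega)^{2^{n-1}}$ for each $\omega$. By Property~\eqref{hasse:multiplicity} of Lemma~\ref{lem:hasse}, this means the factor $G := (x^q-x)^{2^{n-1}}$ satisfies $(D^jG)(\omega) = 0$ for $j \in \{0,\dotsc,2^{n-1}-1\}$.

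First I would handle the easy case $i < 2^{n-1}$. Using linearity (Property~\eqref{hasse:linearity}), I have $(D^iF)(\omega) = (D^i(F_1 G))(\omega) + (D^iF_0)(\omega)$, so it suffices to show the first term vanishes. The natural tool is a Leibniz-type product rule for Hasse derivatives, $D^i(F_1 G) = \sum_{a+b=i} (D^a F_1)(D^b G)$, which I would either cite or derive quickly from Property~\eqref{hasse:taylor-coefficient} by expanding $(F_1 G)(x+\omega) = F_1(x+\omega)G(x+\omega)$ and matching coefficients of $x^i$. Every term in this sum has $b \le i < 2^{n-1}$, so evaluating at $\omega$ kills each $(D^bG)(\omega)$ by the divisibility observation, giving $(D^i(F_1G))(\omega) = 0$ as required.

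The harder case is $i \geq 2^{n-1}$, where I must show $(D^iF)(\omega) = \big(D^{i-2^{n-1}}(F_1 + D^{2^{n-1}}F_0)\big)(\omega)$. Write $m = 2^{n-1}$ and $k = i - m \in \{0,\dotsc,m-1\}$. Again splitting $(D^iF)(\omega)$ via linearity, I expect the $F_0$ contribution $(D^iF_0)(\omega) = (D^{m+k}F_0)(\omega)$ to be reshaped using the composition rule $D^m \circ D^k = \binom{m+k}{m}D^{m+k}$ (Property~\eqref{hasse:composition}), or rather its consequence for producing $(D^k(D^mF_0))(\omega)$. Here the characteristic-two arithmetic is essential: I would show that the binomial coefficients $\binom{m+k}{m}$ reduce correctly modulo two for $m = 2^{n-1}$ and $0 \le k < m$, so that $D^{m+k}F_0$ and $D^k(D^mF_0)$ agree (a Lucas-theorem computation, since the low $n-1$ bits of $m$ are zero). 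The $F_1 G$ contribution requires the product rule again: in $D^{m+k}(F_1 G) = \sum_{a+b=m+k}(D^aF_1)(D^bG)$, every surviving term after evaluation needs $b \ge m$, and I would argue that the leading such term reproduces $(D^k F_1)(\omega)(D^m G)(\omega)$ while verifying that $(D^m G)(\omega) = 1$ and that the higher terms $b > m$ vanish at $\omega$.

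The main obstacle I anticipate is pinning down the precise behavior of $(D^jG)(\omega)$ for $j \ge m$, where $G = (x^q - x)^{2^{n-1}}$. The vanishing for $j < m$ is automatic from multiplicity, but the claim implicitly needs $(D^mG)(\omega) = 1$ and, crucially, $(D^jG)(\omega) = 0$ for $m < j \le m + k$; establishing these requires understanding the Hasse derivatives of a $2^{n-1}$th power in characteristic two. I would exploit the freshman's-dream identity $(x^q-x)^{2^{n-1}} = (x^{q} - x)^{2^{n-1}}$ together with the fact that, over characteristic two, raising to the power $2^{n-1}$ is a ring homomorphism, so $G = H^{2^{n-1}}$ where $H = x^q - x$; this should let me express its low-order Hasse derivatives cleanly and confirm that $D^mG$ behaves like the $m$th power of $DH = $ (a computation giving the needed $1$ and vanishing values). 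Reconciling these binomial and Frobenius-type calculations so that all the pieces combine into exactly the stated right-hand side is where the bulk of the careful bookkeeping will lie.
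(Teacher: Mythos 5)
Your plan is correct, and it takes a genuinely different route from the paper's proof. Every step you sketch goes through: the Leibniz rule $D^i(F_1G)=\sum_{a+b=i}(D^aF_1)(D^bG)$ follows from Property~\eqref{hasse:taylor-coefficient} as you indicate, and the point you flag as the main obstacle resolves exactly along the lines you predict, since writing $G=H^{2^{n-1}}$ with $H=x^q-x$ and expanding $G(x+y)=H(x+y)^{2^{n-1}}=\bigl(\sum_{i}(D^iH)(x)\,y^i\bigr)^{2^{n-1}}=\sum_{i}\bigl((D^iH)(x)\bigr)^{2^{n-1}}y^{2^{n-1}i}$ shows that $D^jG=0$ unless $2^{n-1}\mid j$, while $D^1H=qx^{q-1}-1=1$ gives $D^{2^{n-1}}G=1$ identically; hence $(D^{2^{n-1}}G)(\omega)=1$ and $(D^jG)(\omega)=0$ for $2^{n-1}<j<2^n$, which is all your product-rule bookkeeping needs. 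The paper avoids the product rule entirely: it substitutes $x+\omega$ directly into \eqref{eqn:taylor-expansion} and uses $(x+\omega)^q-(x+\omega)=x^q-x$ (because $\omega^q=\omega$) together with the characteristic-two Frobenius to obtain the closed form $F(x+\omega)=F_1(x+\omega)x^{2^{n-1}q}+F_1(x+\omega)x^{2^{n-1}}+F_0(x+\omega)$, from which both cases of the lemma are read off at once as coefficients of $x^i$ with $i<2^n\leq 2^{n-1}q$; the only remaining step is the composition identity $D^i=D^{i-2^{n-1}}\circ D^{2^{n-1}}$, proved via Lucas' lemma exactly as in your treatment of the $F_0$ term. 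What each approach buys: the paper's single substitution handles the product structure implicitly, so it is shorter and needs no case analysis on which $(D^bG)(\omega)$ survive; your version is more modular, isolating the translation-invariance of $x^q-x$ and the derivative values of $G$ as reusable facts, and giving the vanishing in the case $i<2^{n-1}$ a conceptual explanation (the multiplicity of $G$ at every point of $\F$, via Property~\eqref{hasse:multiplicity}) rather than a degree count. Both arguments ultimately rest on the same three ingredients: $\omega^q=\omega$, the freshman's dream, and Lucas' lemma.
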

\begin{proof} Let $n\in\N$ be nonzero, $F_0,F_1\in\F[x]_{2^{n-1}q}$ and define $F$ by~\eqref{eqn:taylor-expansion}. Then
\begin{equation*}
	F(x+\omega)
	=F_1(x+\omega)x^{2^{n-1}q}+F_1(x+\omega)x^{2^{n-1}}+F_0(x+\omega)
\end{equation*}
for $\omega\in\F$. Consequently, as  $2^{n-1}q\geq 2^n$, Property~\eqref{hasse:taylor-coefficient} of Lemma~\ref{lem:hasse} implies that
\begin{equation*}
	\left(D^iF\right)(\omega)
	=\begin{cases}
		\left(D^iF_0\right)(\omega) & \text{if $i<2^{n-1}$},\\
		\big(D^{i-2^{n-1}}F_1\big)(\omega)+\left(D^iF_0\right)(\omega)
		& \text{otherwise},
	\end{cases}
\end{equation*}
for $\omega\in\F$ and $i\in\{0,\dotsc,2^n-1\}$. Therefore, linearity of Hasse derivatives implies that the lemma will follow if we can show that $D^i=D^{i-2^{n-1}}\circ D^{2^{n-1}}$ for $i\in\{2^{n-1},\dotsc,2^n-1\}$. To this end, we use Lucas' lemma~\cite[p.~230]{lucas1878} (see also~\cite{fine1947}), which states that
\begin{equation}\label{eqn:lucas}
	\binom{u}{v}
	\equiv\binom{u\bdiv{2^r}}{v\bdiv{2^r}}
	\binom{u\bmod{2^r}}{v\bmod{2^r}}\pmod{2}
	\quad\text{for $u,v,r\in\N$}.
\end{equation}
By combining Lucas' lemma with Property~\eqref{hasse:composition} of Lemma~\ref{lem:hasse}, we find that
\begin{equation*}
	D^{i-2^{n-1}}\circ D^{2^{n-1}}
	=\binom{2^{n-1}+(i-2^{n-1})}{2^{n-1}}D^i
	=\binom{1}{1}\binom{i-2^{n-1}}{0}D^i
	=D^i
\end{equation*}
for $i\in\{2^{n-1},\dotsc,2^n-1\}$.
\end{proof}

Given a vector $(h_0,\dotsc,h_{2^nq-1})\in\F^{2^nq}$ that defines an instance of the Hermite interpolation problem, our algorithm recursively computes the corresponding polynomial $F\in\F[x]_{2^nq}$ as follows. If $n=0$, then we are in the base case of the recursion, and $F$ is recovered by a standard interpolation algorithm. If $n\geq 1$, then the algorithm is recursively called on $(h_0,\dotsc,h_{2^{n-1}q-1})$ and $(h_{2^{n-1}q},\dotsc,h_{2^nq-1})$. Lemmas~\ref{lem:existence-uniqueness} and~\ref{lem:recursion} imply that the recursive calls return $F_0$ and $F_1+D^{2^{n-1}}F_0$, where $F_0$ and $F_1$ are the unique polynomials in $\F[x]_{2^{n-1}q}$ that satisfy~\eqref{eqn:taylor-expansion}. Thus, the algorithm next recovers $F_1$ by computing $D^{2^{n-1}}F_0$ and adding it to $F_1+D^{2^{n-1}}F_0$. Finally, $F$ is computed by expanding~\eqref{eqn:taylor-expansion} as
\begin{equation}\label{eqn:expanded-taylor}
	F=F_1x^{2^{n-1}q}+F_1x^{2^{n-1}}+F_0.
\end{equation}
Given a polynomial $F\in\F[x]_{2^nq}$, the evaluation algorithm uses a standard evaluation algorithm in its base case of $n=0$, and if $n\geq 1$, then it simply reverses the steps of the interpolation algorithm by first computing $F_0$ and $F_1$, then $F_1+D^{2^{n-1}}F_0$, and finally recursively evaluating $F_0$ and $F_1+D^{2^{n-1}}F_0$. In both algorithms, the following lemma is used to compute derivatives.

\begin{lemma}\label{lem:derivative} Let $n\in\N$ be nonzero and $F=\sum^{2^{n-1}q-1}_{i=0}f_ix^i\in\F[x]$. Then
\begin{equation}\label{eqn:derivative}
	D^{2^{n-1}}F
	=\sum^{q/2-1}_{i=0}x^{2^ni}
	\sum^{2^{n-1}-1}_{j=0}f_{2^{n-1}(2i+1)+j}x^j.
\end{equation}
\end{lemma}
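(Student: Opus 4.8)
The plan is to compute $D^{2^{n-1}}$ applied to each monomial $x^i$ of $F$ using Property~\eqref{hasse:monomial} of Lemma~\ref{lem:hasse}, namely $D^{2^{n-1}}x^i=\binom{i}{2^{n-1}}x^{i-2^{n-1}}$, and then determine exactly which coefficients $f_i$ survive by analyzing when the binomial coefficient $\binom{i}{2^{n-1}}$ is nonzero modulo $2$. The right-hand side of~\eqref{eqn:derivative} asserts that the surviving indices are precisely those of the form $i=2^{n-1}(2k+1)+j$ with $0\leq k<q/2$ and $0\leq j<2^{n-1}$, and that for these indices the binomial coefficient reduces to $1$. So the heart of the argument is a clean application of Lucas' lemma~\eqref{eqn:lucas}.

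First I would invoke linearity of the Hasse derivative (Property~\eqref{hasse:linearity}) to write $D^{2^{n-1}}F=\sum_i f_i\binom{i}{2^{n-1}}x^{i-2^{n-1}}$, where the sum runs over $0\leq i<2^{n-1}q$. Next I would apply Lucas' lemma with $r=n-1$, $u=i$, $v=2^{n-1}$. Since $2^{n-1}\bdiv{2^{n-1}}=1$ and $2^{n-1}\bmod{2^{n-1}}=0$, the lemma gives $\binom{i}{2^{n-1}}\equiv\binom{i\bdiv{2^{n-1}}}{1}\binom{i\bmod{2^{n-1}}}{0}\pmod 2$. The second factor is always $1$, and $\binom{i\bdiv{2^{n-1}}}{1}=i\bdiv{2^{n-1}}$, so modulo $2$ the coefficient equals the parity of $i\bdiv{2^{n-1}}$. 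Writing $i=2^{n-1}m+j$ with $0\leq j<2^{n-1}$, the surviving terms are exactly those with $m$ odd, i.e.\ $m=2k+1$; this is precisely the index form $i=2^{n-1}(2k+1)+j$ appearing in~\eqref{eqn:derivative}.

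Finally I would check the ranges and the exponent shift. The constraint $0\leq i<2^{n-1}q$ together with $m=2k+1$ and $0\leq j<2^{n-1}$ forces $0\leq k<q/2$, matching the outer summation bound (here $q$ is even since $\F$ has characteristic two and $q>1$, so $q/2$ is an integer). For a surviving term the exponent is $i-2^{n-1}=2^{n-1}(2k+1)+j-2^{n-1}=2^nk+j$, so the monomial is $x^{2^nk+j}=x^{2^nk}x^j$, which reassembles into the double sum $\sum_{k=0}^{q/2-1}x^{2^nk}\sum_{j=0}^{2^{n-1}-1}f_{2^{n-1}(2k+1)+j}x^j$, as claimed.

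I do not anticipate a genuine obstacle here; the result is essentially a bookkeeping identity. The one point demanding care is the application of Lucas' lemma to a single binomial coefficient and the correct identification of the surviving index set, since an off-by-one error in splitting $i$ into its quotient and remainder by $2^{n-1}$ would misalign the summation. Everything else is routine index manipulation and relabeling.
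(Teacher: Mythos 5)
Your proof is correct and follows essentially the same route as the paper: both reduce to Property~\eqref{hasse:monomial} of Lemma~\ref{lem:hasse} together with Lucas' lemma~\eqref{eqn:lucas} applied with $r=n-1$, so that survival of a term is governed by the parity of $i\bdiv 2^{n-1}$. The paper merely parameterises the exponent as $2^{n-1}(2i+b)+j$ with $b\in\{0,1\}$ before applying Lucas, obtaining the coefficient $\binom{2i+b}{1}\binom{j}{0}=b$ directly, which is the same parity argument you carry out after the fact.
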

\begin{proof} Let $n\in\N$ be nonzero and $F=\sum^{2^{n-1}q-1}_{i=0}f_ix^i\in\F[x]$. Then Property~\eqref{hasse:monomial} of Lemma~\ref{lem:hasse} and Lucas' lemma, in the form of~\eqref{eqn:lucas}, imply that
\begin{equation*}
	D^{2^{n-1}}x^{2^{n-1}(2i+b)+j}
	=\binom{2i+b}{1}\binom{j}{0}x^{2^{n-1}(2i+b-1)+j}
	=bx^{2^{n-1}(2i+b-1)+j}
\end{equation*}
for $b\in\{0,1\}$, $i\in\N$ and $j\in\{0,\dotsc,2^{n-1}-1\}$. Therefore, writing $F$ in the form
\begin{equation*}
	F=
	\sum^1_{b=0}
	\sum^{q/2-1}_{i=0}
	\sum^{2^{n-1}-1}_{j=0}f_{2^{n-1}(2i+b)+j}x^{2^{n-1}(2i+b)+j}
\end{equation*}
and applying $D^{2^{n-1}}$ to each of its terms yields~\eqref{eqn:derivative}.
\end{proof}

\section{Evaluation algorithm}\label{sec:evaluation}

To solve the Hermite evaluation problem for arbitrary lengths we reduce to the special case of the preceding section by padding the input vector with zeros. Following the approach of pruned and truncated FFT algorithms, we lessen the penalty incurred by having to solve the larger problems by pruning those steps of the algorithm that are specific to the computation of unwanted entries in the output. Thus, we consider the following revised problem in this section: given the coefficients of a polynomial $F\in\F[x]_{2^nq}$ and $c\in\{1,\dotsc,2^nq\}$, compute $(D^{i\bdiv{q}}F)(\omega_{i\bmod{q}})$ for $i\in\{0,\dotsc,c-1\}$. The length $\ell$ Hermite evaluation problem is then captured by taking $n=\ceil{\log_2\ceil{\ell/q}}$ and $c=\ell$.

The Hermite evaluation algorithm is described in Algorithm~\ref{alg:hermite-evaluation}. The algorithm operates on a vector $(a_0,\dotsc,a_{2^nq-1})\in\F^{2^nq}$ that initially contains the coefficients of a polynomial $F\in\F[x]_{2^nq}$, and overwrites $a_i$ with $(D^{i\bdiv{q}}F)(\omega_{i\bmod{q}})$ for $i$ less than the input value $c$. The remaining entries of the vector are either unchanged or set to intermediate values from the computation. If $c>2^{n-1}q$, then the algorithm follows the steps described in the preceding section for the length $2^nq$ problem, with the exception that the recursive call used to evaluate $F_1+D^{2^{n-1}}F_0$ only computes the $c-2^{n-1}q$ values required for the output. If $c\leq 2^{n-1}q$, then the output depends on $F_0$ only, so only $F_0$ is computed (by the function \textsf{\LeftFunction}) and recursively evaluated. Once again, the recursion terminates with $n=0$, which is handle by an algorithm \textsf{Evaluate} that satisfies the specifications of Algorithm~\ref{alg:evaluation}.

\begin{algorithm}[h]
	\caption{$\textsf{Evaluate}((a_0,\dotsc,a_{q-1}),c)$}
	\label{alg:evaluation}
	\begin{algorithmic}[1]
		\Require $(a_0,\dotsc,a_{q-1})\in\F^q$ and $c\in\{1,\dotsc,q\}$ such that $\sum^{q-1}_{i=0}a_ix^i=F$ for some $F\in\F[x]_q$.
		\Ensure $a_i=F(\omega_i)$ for $i\in\{0,\dotsc,c-1\}$.
	\end{algorithmic}
\end{algorithm}

Algorithm~\ref{alg:evaluation} may be realised with a complexity of 
$\bigO(\mult(q)+\mult(c)\log c)$ operations in $\F$ by the use of remainder 
trees. For small values of $c$, one can apply Horner's rule for each of the $c$ 
evaluation points. Naive matrix-vector products are efficient for small $q$, 
while additive and (standard) multiplicative FFT algorithms become more 
efficient for large $q$. For multiplicative FFT algorithms to be used, the 
multiplicative group of the field must have smooth cardinality, and it is 
necessary to first reduce modulo $x^{q-1}-1$ at the cost of one addition. 
Moreover, if one has control over the enumeration of the field, then it is 
possible to obtain a better complexity for $c<q$ by using the truncated FFT 
algorithm of Larrieu~\cite{larrieu2017}.

\begin{algorithm}[h]
	\caption{$\textsf{HermiteEvaluate}((a_0,\dotsc,a_{2^nq-1}),c)$}
	\label{alg:hermite-evaluation}
	\begin{algorithmic}[1]
		\Require $(a_0,\dotsc,a_{2^nq-1})\in\F^{2^nq}$ and $c\in\{1,\dotsc,2^nq\}$ such that $n\in\N$ and $\sum^{2^nq-1}_{i=0}a_ix^i=F$ for some $F\in\F[x]_{2^nq}$.
		\Ensure $a_i=(D^{i\bdiv{q}}F)(\omega_{i\bmod{q}})$ for $i\in\{0,\dotsc,c-1\}$.
		\If{$n=0$}
			\State\label{eval:base}\Call{Evaluate}{$(a_0,\dotsc,a_{q-1}),c$}
			\hfill\textit{/* Algorithm~\ref{alg:evaluation} */}
		\ElseIf{$c>2^{n-1}q$}
			\For{$i=2^{n-1}(q+1)-1,2^{n-1}(q+1)-2,\dotsc,2^{n-1}$}\label{eval:taylor-start}
				\State $a_i\leftarrow a_i+a_{2^{n-1}(q-1)+i}$
			\EndFor\label{eval:taylor-end}
			\For{$i=q/2,\dotsc,q-1$}\label{eval:hasse-start}
				\For{$j=0,\dotsc,2^{n-1}-1$}
					\State $a_{2^ni+j}\leftarrow a_{2^ni+j}+a_{2^ni+j-(q-1)2^{n-1}}$
				\EndFor
			\EndFor\label{eval:hasse-end}
			\State\label{eval:large-c-left-recursion} \Call{HermiteEvaluate}{$(a_0,\dotsc,a_{2^{n-1}q-1}),2^{n-1}q$}
			\State\label{eval:large-c-right-recursion} \Call{HermiteEvaluate}{$(a_{2^{n-1}q},\dotsc,a_{2^nq-1}),c-2^{n-1}q$}	
		\Else
			\State\label{eval:compute-F_0} \Call{\LeftFunction}{$a,0$}
			\State\label{eval:small-c-left-recursion} \Call{HermiteEvaluate}{$(a_0,\dotsc,a_{2^{n-1}q-1}),c$}
		\EndIf
		\vspace{-0.65\baselineskip}
		\Statex\hspace*{-\algorithmicindent}\hrulefill
		\Function{\LeftFunction}{$(a_0,\dotsc,a_{2^nq-1}),c$}:
			\For{$i=\max(c,2^{n-1}),\dotsc,2^{n-1}q-1$}
				\State $a_i\leftarrow a_i+a_{2^{n-1}(q-1)+i}$
			\EndFor
			\For{$i=\max(c,2^{n-1}),\dotsc,2^n-1$}
				\State $a_i\leftarrow a_i+a_{2^n(q-1)+i}$
			\EndFor
		\EndFunction
	\end{algorithmic}
\end{algorithm}

\begin{proposition} Algorithm~\ref{alg:hermite-evaluation} is correct if Algorithm~\ref{alg:evaluation} is correctly implemented.
\end{proposition}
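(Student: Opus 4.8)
The plan is to argue by induction on $n$. The base case $n=0$ is immediate: the algorithm simply calls \textsf{Evaluate}, which by hypothesis sets $a_i=F(\omega_i)$ for $i\in\{0,\dotsc,c-1\}$, and since $i\bdiv q=0$ and $i\bmod q=i$ throughout this range, these are exactly the required values $(D^{i\bdiv q}F)(\omega_{i\bmod q})$.

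For the inductive step ($n\geq 1$) I would first record the coefficient-level form of the decomposition from Section~\ref{sec:setup}. Writing $F=\sum f_ix^i$ and expanding \eqref{eqn:expanded-taylor}, the identity $f_i=(F_0)_i+(F_1)_{i-2^{n-1}}+(F_1)_{i-2^{n-1}q}$ (with coefficients indexed outside $\{0,\dotsc,2^{n-1}q-1\}$ read as zero) determines $F_0$ and $F_1$ from $F$. The core of the argument is then a single claim about the array state just before the recursive calls: in the branch $c>2^{n-1}q$, after the two \textbf{For} loops one has $a_i=(F_0)_i$ for $i\in\{0,\dotsc,2^{n-1}q-1\}$ and $a_{2^{n-1}q+k}=(F_1+D^{2^{n-1}}F_0)_k$ for $k\in\{0,\dotsc,2^{n-1}q-1\}$; in the branch $c\leq 2^{n-1}q$, after \textsf{\LeftFunction} one has only $a_i=(F_0)_i$ for $i\in\{0,\dotsc,2^{n-1}q-1\}$ (the top half being irrelevant). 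Granting this claim, the recursive calls receive the coefficient vectors of $F_0$ and of $F_1+D^{2^{n-1}}F_0$ on inputs of valid length, so the inductive hypothesis makes them return the corresponding Hermite values, and Lemma~\ref{lem:recursion} identifies these with the required values of $F$: the relation $i\bdiv q<2^{n-1}$ handles the bottom half, while on the top half the index shift $i\mapsto i-2^{n-1}q$ together with $(2^{n-1}q+k)\bdiv q=2^{n-1}+(k\bdiv q)$ and $(2^{n-1}q+k)\bmod q=k\bmod q$ matches the second case of the lemma.

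Proving the claim is where the real work lies, and I expect the index bookkeeping to be the main obstacle. I would verify it by tracking entries through the loops. For the first loop I would check that each source $a_{2^{n-1}(q-1)+i}$ lies in the top half; running $i$ \emph{downwards} is what makes this correct, because the entries in the overlap region $\{2^{n-1}q,\dotsc,2^{n-1}(q+1)-1\}$ are rewritten to $(F_1)_k$ before they are reused as sources, so the loop simultaneously turns the whole top half into $F_1$ and, on each bottom entry $i\in\{2^{n-1},\dotsc,2^{n-1}q-1\}$, adds back $(F_1)_{i-2^{n-1}}$ to cancel it (here characteristic two gives $2(F_1)_{i-2^{n-1}}=0$, leaving $(F_0)_i$). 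For the second loop I would invoke Lemma~\ref{lem:derivative}, whose shape shows that the loop's top-half targets are precisely the positions carrying the nonzero coefficients of $D^{2^{n-1}}F_0$, while its sources lie in the now-finalised bottom half and hold exactly the matching coefficients of $F_0$; thus the loop adds $D^{2^{n-1}}F_0$ to the top half without disturbing the bottom. The branch $c\leq 2^{n-1}q$ needs a parallel but separate check, since \textsf{\LeftFunction} must produce $F_0$ without first building a finalised top half to read from: here I would substitute the closed forms $(F_1)_k=f_{2^{n-1}q+k}$ for $k\geq 2^{n-1}$ and $(F_1)_k=f_{2^{n-1}q+k}+f_{2^nq-2^{n-1}+k}$ for $k<2^{n-1}$ directly into $(F_0)_i=f_i+(F_1)_{i-2^{n-1}}$, and confirm that the two loops of \textsf{\LeftFunction} contribute exactly these $f$-values, the first loop supplying one term for every $i\in\{2^{n-1},\dotsc,2^{n-1}q-1\}$ and the second supplying the extra term for the low range $i\in\{2^{n-1},\dotsc,2^n-1\}$. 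Throughout, the two points to watch are that every source is read before it is overwritten (fixing the loop directions) and that all arithmetic is modulo two, which is precisely what makes the in-place cancellations valid.
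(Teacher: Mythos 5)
Your proposal is correct and takes essentially the same route as the paper's proof: induction on $n$, split into the cases $c>2^{n-1}q$ and $c\leq 2^{n-1}q$, use \eqref{eqn:expanded-taylor} and Lemma~\ref{lem:derivative} to show the in-place loops produce the coefficients of $F_0$ and of $F_1+D^{2^{n-1}}F_0$, and then combine the induction hypothesis with Lemma~\ref{lem:recursion} via the index identities $(2^{n-1}q+i)\bdiv q=2^{n-1}+(i\bdiv q)$ and $(2^{n-1}q+i)\bmod q=i\bmod q$. The explicit bookkeeping you carry out (the descending loop order so overlap entries are rewritten before being read, the characteristic-two cancellations, and the closed forms $(F_1)_k=f_{2^{n-1}q+k}$ or $f_{2^{n-1}q+k}+f_{2^nq-2^{n-1}+k}$ underlying \textsf{\LeftFunction}) is precisely what the paper asserts without detail, and it checks out.
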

\begin{proof} Under the assumption that Algorithm~\ref{alg:evaluation} has been correctly implemented, we use induction to show that for all $n\in\N$, Algorithm~\ref{alg:hermite-evaluation} produces the correct output when given inputs $(a_0,\dotsc,a_{2^nq-1})\in\F^{2^nq}$ and $c\in\{1,\dotsc,2^nq\}$. Therefore, suppose that Algorithm~\ref{alg:evaluation} has been correctly implemented. Then for inputs with $n=0$, the algorithm trivially produces the correct output since Algorithm~\ref{alg:evaluation} is simply applied in this case. Let $n\geq 1$ and suppose that Algorithm~\ref{alg:hermite-evaluation} functions correctly for all inputs with smaller values of~$n$. Suppose that $(a_0,\dotsc,a_{2^nq-1})\in\F^{2^nq}$ and $c\in\{1,\dotsc,2^nq\}$ are given to the algorithm as inputs, and let $F\in\F[x]_{2^nq}$ be the corresponding polynomial for which the input requirements are satisfied. Let $F_0,F_1\in\F[x]_{2^{n-1}q}$ such that~\eqref{eqn:taylor-expansion} and, equivalently, \eqref{eqn:expanded-taylor} hold.
	
Suppose that $c>2^{n-1}q$. Then \eqref{eqn:expanded-taylor} implies that Lines~\ref{eval:taylor-start} and~\ref{eval:taylor-end} set $a_i$ equal to the coefficient of $x^i$ in $F_0$, and $a_{2^{n-1}q+i}$ equal to the coefficient of $x^i$ in $F_1$, for $i\in\{0,\dotsc,2^{n-1}q-1\}$. Consequently, Lemma~\ref{lem:derivative} implies that Lines~\ref{eval:hasse-start} to~\ref{eval:hasse-end} set $a_{2^{n-1}q+i}$ equal to the coefficient of $x^i$ in $F_1+D^{2^{n-1}}F_0$ for $i\in\{0,\dotsc,2^{n-1}q-1\}$. As these three lines do not modify $a_0,\dotsc,a_{2^{n-1}q-1}$, which contain the coefficients of~$F_0$, the induction hypothesis and Lemma~\ref{lem:recursion} imply that the recursive call of Lines~\ref{eval:large-c-left-recursion} sets
\begin{equation}\label{eqn:evaluation-left}
	a_i
	=\left(D^{i\bdiv{q}}F_0\right)
	\left(\omega_{i\bmod{q}}\right)
	=\left(D^{i\bdiv{q}}F\right)
	\left(\omega_{i\bmod{q}}\right)
\end{equation}
for $i\in\{0,\dotsc,2^{n-1}q-1\}$. Similarly, the recursive call of Line~\ref{eval:large-c-right-recursion} sets
\begin{equation}\label{eqn:evaluation-right}
	\begin{split}
	a_{2^{n-1}q+i}
	&=\left(D^{i\bdiv{q}}\left(F_1+D^{2^{n-1}}F_0\right)\right)
	\left(\omega_{i\bmod{q}}\right)\\
	&=\left(D^{2^{n-1}+(i\bdiv{q})}F\right)
	\left(\omega_{i\bmod{q}}\right)\\
	&=\left(D^{(2^{n-1}q+i)\bdiv{q}}F\right)
	\left(\omega_{(2^{n-1}q+i)\bmod{q}}\right)
	\end{split}
\end{equation}
for $i\in\{0,\dotsc,c-2^{n-1}q-1\}$. The algorithm stops at this point, and thus produces the correct output.

Suppose now that $c\leq 2^{n-1}q$. Then \eqref{eqn:expanded-taylor} implies that Line~\ref{eval:compute-F_0} sets $a_i$ equal to the coefficient of $x^i$ in $F_0$ for $i\in\{0,\dotsc,2^{n-1}q-1\}$. Consequently, the induction hypothesis and Lemma~\ref{lem:recursion} imply that \eqref{eqn:evaluation-left} holds for $i\in\{0,\dotsc,c-1\}$ after the recursive call of Line~\ref{eval:small-c-left-recursion} has been performed. The algorithm stops at this point, and thus produces the correct output.
\end{proof}

For $i,j\in\Z$ such that $j$ is nonzero, define $i\bmodstar{j}=i-(\ceil{i/j}-1)j$. The following proposition bounds the additive and multiplicative complexities of Algorithm~\ref{alg:hermite-evaluation} in term of those of Algorithm~\ref{alg:evaluation}.

\begin{proposition}\label{prop:evaluation-complexity} For $n\in\N$, define $\A_n,\M_n:\{1,\dotsc,2^nq\}\rightarrow\N$ as follows: $\A_n(c)$ and $\M_n(c)$ are respectively the number of additions and multiplications in~$\F$ performed by Algorithm~\ref{alg:hermite-evaluation} (for some implementation of Algorithm~\ref{alg:evaluation}) when given inputs $(a_0,\dotsc,a_{2^nq-1})\in\F^{2^nq}$ and $c\in\{1,\dotsc,2^nq\}$. Then
\begin{multline*}
	\A_n(c)
	\leq
	\A_0(q)
	\left(\ceil{c/q}-1\right)
	+\A_0(c\bmodstar{q})\\
	+\left(
		\frac{3}{4}\ceil{\log_2\ceil{c/q}}
		-\frac{1}{4}
	\right)
	\left(\ceil{c/q}-1\right)q
	+(2^n-1)q
\end{multline*}
and $\M_n(c)=\M_0(q)(\ceil{c/q}-1)+\M_0(c\bmodstar{q})$ for $n\in\N$ and $c\in\{1,\dotsc,2^nq\}$.
\end{proposition}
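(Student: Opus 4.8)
The plan is to read off exact recurrences for $\A_n$ and $\M_n$ by counting operations line by line in Algorithm~\ref{alg:hermite-evaluation}, and then separate the ``base-case'' contributions (all of which come from the terminal \textsf{Evaluate} calls) from the additive overhead produced by the two loops and by \textsf{\LeftFunction}. First I would tally the additions: the loop of Lines~\ref{eval:taylor-start}--\ref{eval:taylor-end} performs $2^{n-1}q$ additions and the double loop of Lines~\ref{eval:hasse-start}--\ref{eval:hasse-end} performs $(q/2)2^{n-1}=2^{n-2}q$ additions, for a total of $\tfrac34 2^nq$ per level of the $c>2^{n-1}q$ branch; the call $\textsf{\LeftFunction}(a,0)$ performs $2^{n-1}(q-1)+2^{n-1}=2^{n-1}q$ additions; and no multiplications occur outside \textsf{Evaluate}. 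This yields, for $n\geq 1$, the recurrence $\A_n(c)=\tfrac34 2^nq+\A_{n-1}(2^{n-1}q)+\A_{n-1}(c-2^{n-1}q)$ when $c>2^{n-1}q$ and $\A_n(c)=2^{n-1}q+\A_{n-1}(c)$ when $c\leq 2^{n-1}q$, together with $\M_n(c)=\M_{n-1}(2^{n-1}q)+\M_{n-1}(c-2^{n-1}q)$ and $\M_n(c)=\M_{n-1}(c)$ in the two respective branches.

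\textbf{Multiplications.} For the multiplicative count I would prove the stated equality directly by induction on $n$, checking that the claimed closed form $g(c)=\M_0(q)(\ceil{c/q}-1)+\M_0(c\bmodstar q)$ satisfies the recurrence. The $c\leq 2^{n-1}q$ branch is immediate since $g$ is independent of $n$ and $\ceil{c/q}$ is unchanged. The $c>2^{n-1}q$ branch reduces to three elementary identities: $2^{n-1}q\bmodstar q=q$, so that $g(2^{n-1}q)=2^{n-1}\M_0(q)$; $\ceil{(c-2^{n-1}q)/q}=\ceil{c/q}-2^{n-1}$, since we subtract an exact multiple of $q$; and $(c-2^{n-1}q)\bmodstar q=c\bmodstar q$. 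Summing $g(2^{n-1}q)+g(c-2^{n-1}q)$ and using these identities recovers $g(c)$ exactly.

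\textbf{Additions.} Here I would peel off the base-case part, setting $\tilde\A_n(c)=\A_n(c)-\big((\ceil{c/q}-1)\A_0(q)+\A_0(c\bmodstar q)\big)$. The same three $\bmodstar$ identities show the base-case part obeys the multiplicative recurrence, hence cancels cleanly, leaving $\tilde\A_n(c)=\tfrac34 2^nq+\tilde\A_{n-1}(2^{n-1}q)+\tilde\A_{n-1}(c-2^{n-1}q)$ and $\tilde\A_n(c)=2^{n-1}q+\tilde\A_{n-1}(c)$ in the two branches, where $\tilde\A_n$ depends on $c$ only through $k=\ceil{c/q}$. Taking $c=2^nq$ (always the first branch) gives the scalar recurrence $u_n=\tfrac34 2^nq+2u_{n-1}$ with $u_0=0$, whose solution is the exact value $\tilde\A_n(2^nq)=\tfrac34 n\,2^nq$. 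I would then prove $\tilde\A_n(c)\leq T_n(c):=\big(\tfrac34\ceil{\log_2 k}-\tfrac14\big)(k-1)q+(2^n-1)q$ by induction on $n$. The branch $k\leq 2^{n-1}$ transfers immediately, since $2^{n-1}q+(2^{n-1}-1)q=(2^n-1)q$ and the $\ceil{\log_2 k}$ term is unchanged.

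\textbf{The main obstacle.} The hard part will be the branch $c>2^{n-1}q$, i.e.\ $2^{n-1}<k\leq 2^n$ and $\ceil{\log_2 k}=n$. Writing $m=k-2^{n-1}$ with $1\leq m\leq 2^{n-1}$ and $L_m=\ceil{\log_2 m}\leq n-1$, I would substitute the exact value $\tilde\A_{n-1}(2^{n-1}q)=\tfrac34(n-1)2^{n-1}q$ and the inductive bound $\tilde\A_{n-1}(c-2^{n-1}q)\leq T_{n-1}(m)$, and then verify that $T_n(k)-\tilde\A_n(c)\geq \tfrac34(n-L_m)(m-1)q\geq 0$. The crux is an exact algebraic cancellation: the coefficient of $(m-1)$ is $\tfrac34(n-L_m)\geq 0$, while all remaining terms collapse, with the contribution $2^{n-2}$ from the $2^{n-1}$-scale terms exactly annihilating the $-2^{n-2}$ from the $\tfrac34 2^n$ overhead and the $(2^n-1)q$ versus $(2^{n-1}-1)q$ bookkeeping. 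This cancellation is precisely what pins down the constants $\tfrac34$ and $-\tfrac14$ in the statement, so the delicate point is carrying the exact values (rather than asymptotics) through this step; once the constant terms are confirmed to vanish, nonnegativity of the residual $\tfrac34(n-L_m)(m-1)q$ closes the induction.
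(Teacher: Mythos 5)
Your proposal is correct, and it shares the paper's overall skeleton: you derive exactly the same per-level operation counts ($\tfrac{3}{4}2^nq$ additions for Lines~\ref{eval:taylor-start}--\ref{eval:hasse-end}, $2^{n-1}q$ additions for the call $\textsf{\LeftFunction}(a,0)$, no multiplications outside \textsf{Evaluate}), the same two-branch recurrences, an identical induction for $\M_n$ based on the identities $2^{n-1}q\bmodstar{q}=q$, $\ceil{(c-2^{n-1}q)/q}=\ceil{c/q}-2^{n-1}$ and $(c-2^{n-1}q)\bmodstar{q}=c\bmodstar{q}$, and the same exact full-length solution. Where you genuinely diverge is the final stage of the additive bound. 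The paper substitutes the full-length values back into the recurrence and proves, by induction on $n$, an \emph{exact} closed form --- equation~\eqref{eqn:evaluation-A_n} --- in which all dependence on $c$ beyond $\ceil{c/q}-1$ and $c\bmodstar{q}$ is isolated in the weighted digit sum $\sum_{k}2^ki_kk$ over the binary expansion of $\ceil{c/q}-1$; the stated inequality then drops out from the one-line bound $\sum_{k}2^ki_kk\leq\left(\ceil{\log_2\ceil{c/q}}-1\right)\left(\ceil{c/q}-1\right)$. You instead peel off the $\A_0$ contributions and run the induction directly on the inequality, which obliges you to verify the cancellation in the $c>2^{n-1}q$ branch by hand; I checked it, and it works as you say: the terms not multiplying $m-1$ contribute $2^{n-2}q-2^{n-2}q=0$, leaving the nonnegative residual $\tfrac{3}{4}(n-L_m)(m-1)q$. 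You were also right to flag that the induction must use the exact value $\tilde\A_{n-1}(2^{n-1}q)=\tfrac{3}{4}(n-1)2^{n-1}q$ rather than the inductive bound $T_{n-1}$ at that point --- the bound there is weaker by $\tfrac{3}{4}(2^{n-1}-n)q$, which would break the induction for $n\geq 3$ since the residual can vanish (e.g.\ $m=1$). The trade-off between the two routes: yours is leaner, avoiding the binary-expansion bookkeeping entirely, but proves only the inequality; the paper's exact formula is strictly more informative (it shows precisely how much slack the bound has for a given pattern of digits of $\ceil{c/q}-1$, and delivers the additive and multiplicative results from one unified computation). Both arguments pin down the constants $\tfrac{3}{4}$ and $-\tfrac{1}{4}$ at the same place --- merging the $\A_0(q)+q/2$ per-level cost with the $\tfrac{3}{4}q$-per-level overhead --- so your identification of that cancellation as the crux is accurate.
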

\begin{proof} For nonzero $n\in\N$, define the indicator function $\delta_n:\{1,\dotsc,2^nq\}\rightarrow\{0,1\}$ by $\delta_n(c)=1$ if and only if $c>2^{n-1}q$. Then given inputs $(a_0,\dotsc,a_{2^nq-1})\in\F^{2^nq}$ and $c\in\{1,\dotsc,2^nq\}$ for some nonzero $n\in\N$, Lines~\ref{eval:taylor-start} to~\ref{eval:hasse-end} of Algorithm~\ref{alg:hermite-evaluation} perform $\delta_n(c)(3/4)2^nq$ additions, Line~\ref{eval:large-c-left-recursion} performs $\delta_n(c)\A_{n-1}(2^{n-1}q)$ additions and $\delta_n(c)\M_{n-1}(2^{n-1}q)$ multiplications, and Line~\ref{eval:large-c-right-recursion} performs $\delta_n(c)\A_{n-1}(c-\delta_n(c)2^{n-1}q)$ additions and $\delta_n(c)\M_{n-1}(c-\delta_n(c)2^{n-1}q)$ multiplications. Furthermore, Line~\ref{eval:compute-F_0} performs $(1-\delta_n(c))2^{n-1}q$ additions, and Line~\ref{eval:small-c-left-recursion} performs $(1-\delta_n(c))\A_{n-1}(c-\delta_n(c)2^{n-1}q)$ additions and $(1-\delta_n(c))\M_{n-1}(c-\delta_n(c)2^{n-1}q)$ multiplications. Summing these contributions, it follows that
\begin{equation}
	\label{eqn:evaluation-A_n-formula-1}
	\A_n(c)
	=
	\A_{n-1}\left(c-\delta_n(c)2^{n-1}q\right)
	+2^{n-1}q
	+\delta_n(c)
	\left(\A_{n-1}\left(2^{n-1}q\right)+2^{n-2}q\right)
\end{equation}
and
\begin{equation}
	\label{eqn:evaluation-M_n-formula-1}
	\M_n(c)
	=
	\M_{n-1}\left(c-\delta_n(c)2^{n-1}q\right)
	+\delta_n(c)\M_{n-1}\left(2^{n-1}q\right)
\end{equation}
for nonzero $n\in\N$ and $c\in\{1,\dotsc,2^nq\}$. In particular,
\begin{equation}
	\label{eqn:full-length-recurrence}
	\A_n(2^nq)=2\A_{n-1}\left(2^{n-1}q\right)+\frac{3}{4}2^nq
	\quad\text{and}\quad
	\M_n(2^nq)=2\M_{n-1}\left(2^{n-1}q\right)
\end{equation}
for nonzero $n\in\N$. Thus,
\begin{equation}
	\label{eqn:full-length-complexity}
	\A_n(2^nq)=2^n\left(\A_0(q)+\frac{3}{4}nq\right)
	\quad\text{and}\quad
	\M_n(2^nq)=2^n\M_0(q)
	\quad\text{for $n\in\N$}.
\end{equation}
Substituting these equations into~\eqref{eqn:evaluation-A_n-formula-1} and~\eqref{eqn:evaluation-M_n-formula-1}, it follows that
\begin{equation}
	\label{eqn:evaluation-A_n-formula-2}
	\A_n(c)
	=\A_{n-1}\left(c-\delta_n(c)2^{n-1}q\right)
	+2^{n-1}q
	+\delta_n(c)2^{n-1}
	\left(\A_0(q)+\frac{3}{4}(n-1)q+\frac{q}{2}\right)
\end{equation}
and
\begin{equation}
	\label{eqn:evaluation-M_n-formula-2}
	\M_n(c)=\M_{n-1}\left(c-\delta_n(c)2^{n-1}q\right)+\delta_n(c)2^{n-1}\M_0(q)
\end{equation}
for nonzero $n\in\N$ and $c\in\{1,\dotsc,2^nq\}$.

For $n,c,\delta\in\N$ such that $n$ is nonzero, we have $\ceil{(c-\delta 2^{n-1}q)/q}=\ceil{c/q}-\delta2^{n-1}$ and $c-\delta 2^{n-1}q\bmodstar{q}=c\bmodstar{q}$. Consequently, the formula for $\M_n(c)$ stated in the proposition follows from~\eqref{eqn:evaluation-M_n-formula-2} by induction on $n$. If $n\in\N$ is nonzero, $c\in\{1,\dotsc,2^nq\}$ and
\begin{equation}\label{eqn:binary-expansion}
	\ceil{c/q}-1=i_0+i_1\cdot 2+\dotsb+i_{n-1}\cdot 2^{n-1},
\end{equation}
with $i_0,\dotsc,i_{n-1}\in\{0,1\}$, then $i_{n-1}=\delta_n(c)$. Therefore, it follows from \eqref{eqn:evaluation-A_n-formula-2} by induction on $n$, that
\begin{equation}\label{eqn:evaluation-A_n}
	\A_n(c)
	=\A_0\left(c\bmodstar{q}\right)
	+\left(2^n-1\right)q
	+\left(\A_0(q)+\frac{q}{2}\right)\left(\ceil{c/q}-1\right)
	+\frac{3}{4}q\sum^{n-1}_{k=0}2^ki_kk
\end{equation}
for $n\in\N$ and $c\in\{1,\dotsc,2^nq\}$, where $i_0,\dotsc,i_{n-1}\in\{0,1\}$ are the coefficients of the binary expansion~\eqref{eqn:binary-expansion}. Here,
\begin{equation*}
	\sum^{n-1}_{k=0}2^ki_kk
	\leq\max\left(\ceil{\log_2\ceil{c/q}}-1,0\right)
	\sum^{n-1}_{k=0}2^ki_k
	=\left(\ceil{\log_2\ceil{c/q}}-1\right)
	\left(\ceil{c/q}-1\right),
\end{equation*}
since $i_k=0$ if $k\geq\ceil{\log_2\ceil{c/q}}$. Combining this inequality with~\eqref{eqn:evaluation-A_n} yields the upper bound on $\A_n(c)$ stated in the proposition.
\end{proof}

The functions $\A_0$ and $\M_0$ defined in Proposition~\ref{prop:evaluation-complexity} describe the additive and multiplicative complexities of the implementation of Algorithm~\ref{alg:evaluation}. When $n=0$ or $c>2^{n-1}q$ for some nonzero $n$, as may be assumed when solving the Hermite evaluation problem, the third and fourth terms of the bound on $\A_n(c)$ are in $\bigO(c\log\ceil{c/q})$. By taking $\A_0$ and $\M_0$ to be in $\bigO(\mult(q)+\mult(c)\log c)$, and making the common assumption (used, for instance, in~\cite{gathen2013}) that $\mult(\ell)/\ell$ is a nondecreasing function of $\ell$, it follows that the length $\ell$ Hermite evaluation problem can be solved with $\bigO(\mult(\ell)\log q+\ell\log\ceil{\ell/q})$ operations in $\F$ by Algorithm~\ref{alg:hermite-evaluation}. For this application, the number of additions performed by Algorithm~\ref{alg:hermite-evaluation} may be reduced by adapting the algorithm to take into account the zeros that initially occupy the $2^nq-\ell$ rightmost entries of the vector $(a_0,\dotsc,a_{2^nq-1})$.

\section{Interpolation algorithm}\label{sec:interpolation}

To solve the Hermite interpolation problem for arbitrary lengths, we use an approach analogous to that employed by the inverse truncated FFT algorithm of Larrieu~\cite{larrieu2017} by reducing to a length $2^nq$ problem under the assumption that the new entries of the output that result from extending the problem are provided as inputs. Thus, we consider the following problem in this section: given $c\in\{1,\dotsc,2^nq\}$ and $(h_0,\dotsc,h_{c-1},f_c,\dotsc,f_{2^nq-1})\in\F^{2^nq}$, compute $f_0,\dotsc,f_{c-1}\in\F$ such that $F=\sum^{2^nq-1}_{i=0}f_ix^i$ satisfies $(D^{i\bdiv{q}}F)(\omega_{i\bmod{q}})=h_i$ for $i\in\{0,\dotsc,c-1\}$. Here, existence and uniqueness of $f_0,\dotsc,f_{c-1}$ follow readily from Lemma~\ref{lem:existence-uniqueness}. The length $\ell$ Hermite interpolation problem is then captured as an instance of the new problem by taking $n=\ceil{\log_2\ceil{\ell/q}}$, $c=\ell$ and $f_c=\dotsb=f_{2^nq-1}=0$. 

The Hermite interpolation algorithm is described in Algorithm~\ref{alg:hermite-interpolation}. If $c>2^{n-1}q$, then the algorithm closely follows the approach described in Section~\ref{sec:setup} by recursively computing the polynomials $F_0$ and $F_1+D^{2^{n-1}}F_0$, before using Lemma~\ref{lem:derivative} and the expansion~\eqref{eqn:expanded-taylor} to compute the desired coefficients of $F$. The recursive call used to recover $F_1+D^{2^{n-1}}F_0$ cannot be made without first computing the coefficient of $x^i$ in the polynomial for $i\geq c-2^{n-1}q$. Consequently, after the algorithm has recovered $F_0$, the required coefficients are computed by function the \textsf{\RightFunction}, which steps through Lines~\ref{eval:taylor-start} to~\ref{eval:hasse-end} of Algorithm~\ref{alg:hermite-evaluation} while only modifying those entries $a_i$ with indices $i\geq c$. If $c\leq 2^{n-1}q$, then the function \textsf{\LeftFunction} from Algorithm~\ref{alg:hermite-evaluation} is used to recover the coefficient of $x^i$ in $F_0$ for $i\geq c$ before the remaining coefficients of the polynomial are recursively computed. The function \textsf{\LeftFunction}, which is its own inverse for fixed $c$, is then used to compute the lower order coefficients of the output. The base case of the recursion is handled by an algorithm \textsf{Interpolate} that satisfies the specifications of Algorithm~\ref{alg:interpolation}.

\begin{algorithm}[h]
	\caption{$\textsf{Interpolate}((a_0,\dotsc,a_{q-1}),c)$}
	\label{alg:interpolation}
	\begin{algorithmic}[1]
		\Require $(a_0,\dotsc,a_{q-1})\in\F^q$ and $c\in\{1,\dotsc,q\}$ such that for some polynomial $F=\sum^{q-1}_{i=0}f_ix^i\in\F[x]_q$ the following conditions hold:
		\begin{enumerate}
			\item $a_i=F(\omega_i)$ for $i\in\{0,\dotsc,c-1\}$, and
			\item $a_i=f_i$ for $i\in\{c,\dotsc,q-1\}$.
		\end{enumerate}
		\Ensure $a_i=f_i$ for $i\in\{0,\dotsc,q-1\}$.
	\end{algorithmic}
\end{algorithm}

Algorithm~\ref{alg:interpolation} may be realised with a complexity of $\bigO(\mult(q)+\mult(c)\log c)$ operations in $\F$ by the use of a fast Chinese remainder algorithm: $\sum^{c-1}_{i=0}f_ix^i$ is equal to the sum of the polynomial $C\in\F[x]_c$ that satisfies $C(\omega_i)=F(\omega_i)$ for $i\in\{0,\dotsc,c-1\}$, and the remainder of $\sum^{q-1}_{i=c}f_ix^i$ upon division by $\prod^{c-1}_{i=0}(x-\omega_i)$, with the product being computed as part of the Chinese remainder algorithm. If the enumeration of the field may be chosen freely and its multiplicative group has smooth cardinality, then a better complexity is obtained by using the inverse truncated FFT algorithm of Larrieu~\cite{larrieu2017}. In doing so, one should set $\omega_{q-1}=0$ so that only a single addition is required on top of the call to the FFT algorithm, since $F(\omega)=f_{q-1}+\sum^{q-2}_{i=0}f_i\omega^i$ for nonzero $\omega\in\F$.

\begin{algorithm}[h]
	\caption{$\textsf{HermiteInterpolate}((a_0,\dotsc,a_{2^nq-1}),c)$}
	\label{alg:hermite-interpolation}
	\begin{algorithmic}[1]
		\Require $(a_0,\dotsc,a_{2^nq-1})\in\F^{2^nq}$ and $c\in\{1,\dotsc,2^nq\}$ such that $n\in\N$ and for some polynomial $F=\sum^{2^nq-1}_{i=0}f_ix^i\in\F[x]_{2^nq}$ the following conditions hold:
		\begin{enumerate}
			\item $a_i=(D^{i\bdiv{q}}F)(\omega_{i\bmod{q}})$ for $i\in\{0,\dotsc,c-1\}$, and
			\item $a_i=f_i$ for $i\in\{c,\dotsc,2^nq-1\}$.
		\end{enumerate}
		\Ensure $a_i=f_i$ for $i\in\{0,\dotsc,2^nq-1\}$.
		\If{$n=0$}
			\State \Call{Interpolate}{$(a_0,\dotsc,a_{q-1}),c$}
			\hfill\textit{/* Algorithm~\ref{alg:interpolation} */}
		\ElseIf{$c>2^{n-1}q$}
			\State\label{interp:large-c-left-recursion} 
			\Call{HermiteInterpolate}{$(a_0,\dotsc,a_{2^{n-1}q-1}),2^{n-1}q$}
			\State\label{interp:inverse-complete-right} \Call{\RightFunction}{$(a_0,\dotsc,a_{2^nq-1}),c$}
			\State\label{interp:large-c-right-recursion} \Call{HermiteInterpolate}{$(a_{2^{n-1}q},\dotsc,a_{2^nq-1}),c-2^{n-1}q$}
			\For{$i=q/2,\dotsc,q-1$}\label{interp:hasse-start}
				\For{$j=0,\dotsc,2^{n-1}-1$}
					\State $a_{2^ni+j}\leftarrow a_{2^ni+j}+a_{2^ni+j-(q-1)2^{n-1}}$
				\EndFor
			\EndFor\label{interp:hasse-end}
			\For{$i=2^{n-1},\dotsc,2^{n-1}(q+1)-1$}\label{interp:taylor-start}
				\State $a_i\leftarrow a_i+a_{2^{n-1}(q-1)+i}$
			\EndFor\label{interp:taylor-end}
		\Else
			\State\label{interp:inverse-complete-left} \Call{\LeftFunction}{$a,c$}\hfill\textit{/* From Algorithm~\ref{alg:hermite-evaluation} */}
			\State\label{interp:small-c-left-recursion} \Call{HermiteInterpolate}{$(a_0,\dotsc,a_{2^{n-1}q-1}),c$}
			\State\label{interp:complete-left} \Call{\LeftFunction}{$a,0$}
		\EndIf
		\vspace{-0.65\baselineskip}
		\Statex\hspace*{-\algorithmicindent}\hrulefill
		\Function{\RightFunction}{$(a_0,\dotsc,a_{2^nq-1}),c$}:
			\For{$i=2^{n-1}(q+1)-1,2^{n-1}(q+1)-2,\dotsc,c$}\label{interp:partial-taylor-start}
				\State $a_i\leftarrow a_i + a_{2^{n-1}(q-1)+i}$
			\EndFor\label{interp:partial-taylor-end}
			\State\label{interp:partial-hasse-start} $t\leftarrow c\bdiv{2^n}$, $r\leftarrow\min(c\bmod{2^n},2^{n-1})$
			\For{$j=0,\dotsc,r-1$}
				\For{$i=t+1,\dotsc,q-1$}
					\State $a_{2^ni+j}\leftarrow a_{2^ni+j}+a_{2^ni+j-(q-1)2^{n-1}}$
				\EndFor
			\EndFor
			\For{$j=r,\dotsc,2^{n-1}-1$}
				\For{$i=t,\dotsc,q-1$}
					\State $a_{2^ni+j}\leftarrow a_{2^ni+j}+a_{2^ni+j-(q-1)2^{n-1}}$
				\EndFor
			\EndFor\label{interp:partial-hasse-end}
		\EndFunction
	\end{algorithmic}
\end{algorithm}

\begin{proposition} Algorithm~\ref{alg:hermite-interpolation} is correct if 
Algorithm~\ref{alg:interpolation} is correctly implemented.
\end{proposition}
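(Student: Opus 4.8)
The plan is to mirror the proof of the correctness proposition for \textsf{HermiteEvaluate}, arguing by induction on $n$ that \textsf{HermiteInterpolate} meets its specification whenever \textsf{Interpolate} does. The base case $n=0$ is immediate, since the algorithm merely invokes \textsf{Interpolate}. For the inductive step I would fix $n\geq 1$, assume correctness for all smaller values, and let $F_0,F_1\in\F[x]_{2^{n-1}q}$ be the unique polynomials for which the input decomposition \eqref{eqn:taylor-expansion} (equivalently \eqref{eqn:expanded-taylor}) holds, writing $G=F_1+D^{2^{n-1}}F_0$ for brevity. The guiding principle is that the interpolation algorithm inverts the evaluation algorithm by applying the inverse of each evaluation step in the reverse order. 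Because $\F$ has characteristic two, each individual addition is its own inverse, so the Hasse block (Lines~\ref{interp:hasse-start}--\ref{interp:hasse-end}) repeats the evaluation block (Lines~\ref{eval:hasse-start}--\ref{eval:hasse-end}) verbatim: its reads lie in the low half $\{0,\dotsc,2^{n-1}q-1\}$ and its writes in the high half, so the elementary operations commute and the block is an involution. By contrast, the Taylor block (Lines~\ref{interp:taylor-start}--\ref{interp:taylor-end}) runs the evaluation loop (Lines~\ref{eval:taylor-start}--\ref{eval:taylor-end}) in the opposite order, which inverts the composite of elementary operations precisely because those operations no longer commute.

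For the case $c>2^{n-1}q$ I would track the array through Lines~\ref{interp:large-c-left-recursion}--\ref{interp:taylor-end}. Since $c>2^{n-1}q$, the first $2^{n-1}q$ entries hold the Hermite data of $F$ at derivative orders below $2^{n-1}$, which by Lemma~\ref{lem:recursion} is exactly the Hermite data of $F_0$; the induction hypothesis then shows that Line~\ref{interp:large-c-left-recursion} overwrites them with the coefficients of $F_0$. The crux is to prove that the call to \textsf{\RightFunction} on Line~\ref{interp:inverse-complete-right} transforms each overflow entry $a_k$ with $k\geq c$ from the coefficient $f_k$ into $[G]_{k-2^{n-1}q}$; together with the untouched Hermite values in positions $\{2^{n-1}q,\dotsc,c-1\}$, which Lemma~\ref{lem:recursion} identifies with the Hermite data of $G$, the high block then satisfies the input hypotheses of \textsf{HermiteInterpolate} of length $c-2^{n-1}q$. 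Granting this, the induction hypothesis makes Line~\ref{interp:large-c-right-recursion} recover all of $[G]$; the Hasse block converts $[G]$ into $[F_1]$ by Lemma~\ref{lem:derivative}, and the Taylor block reassembles the coefficients of $F$ from $[F_0]$ and $[F_1]$ through \eqref{eqn:expanded-taylor}.

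For the case $c\leq 2^{n-1}q$ I would use that \textsf{\LeftFunction}, for fixed $c$, is an involution whose writes lie in the low half and whose reads lie in the high half, hence in entries that hold the original coefficients throughout this branch. Line~\ref{interp:inverse-complete-left} applies \textsf{\LeftFunction} with parameter $c$, which I would show establishes on the low half the input hypotheses of \textsf{HermiteInterpolate} of length $c$ for $F_0$: positions below $c$ retain the Hermite values attributed to $F_0$ by Lemma~\ref{lem:recursion}, while positions $\{c,\dotsc,2^{n-1}q-1\}$ are turned into the coefficients of $F_0$. The recursion on Line~\ref{interp:small-c-left-recursion} therefore fills the low half with $[F_0]$. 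Finally, Line~\ref{interp:complete-left} applies \textsf{\LeftFunction} with parameter $0$, which, reading the unchanged high half, inverts the decomposition performed on Line~\ref{interp:inverse-complete-left} and restores the coefficients $f_0,\dotsc,f_{2^{n-1}q-1}$ of $F$; the high block never changes and already equals $f_{2^{n-1}q},\dotsc,f_{2^nq-1}$, so the output is correct.

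The main obstacle is the detailed verification of \textsf{\RightFunction} in the first case, which I would split into its partial Taylor loop (Lines~\ref{interp:partial-taylor-start}--\ref{interp:partial-taylor-end}) and its partial Hasse loops (Lines~\ref{interp:partial-hasse-start}--\ref{interp:partial-hasse-end}). For the Taylor loop one checks that every read index $2^{n-1}(q-1)+i$ with $i\geq c$ lands among the top $2^{n-1}$ positions, which are never modified and still hold the coefficients, so the loop reproduces the evaluation Taylor step on the overflow region and leaves $a_k=[F_1]_{k-2^{n-1}q}$ for $k\geq c$. For the Hasse loops the key point is purely combinatorial: with $t=c\bdiv{2^n}$ and $r=\min(c\bmod{2^n},2^{n-1})$, the two loops enumerate exactly those pairs $(i,j)$ with $i\in\{q/2,\dotsc,q-1\}$, $j\in\{0,\dotsc,2^{n-1}-1\}$ and $2^ni+j\geq c$, so that \textsf{\RightFunction} performs the evaluation Hasse step precisely on the overflow entries. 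Since $D^{2^{n-1}}F_0$ is supported only on the indices addressed by Lemma~\ref{lem:derivative}, the overflow entries left untouched by these loops already agree with $[G]$. Combining the two observations gives $a_k=[G]_{k-2^{n-1}q}$ for all $k\geq c$, which is the claim driving the first case.
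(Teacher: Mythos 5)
Your proposal is correct and takes essentially the same route as the paper's proof: induction on $n$ with the identical case split on $c>2^{n-1}q$ versus $c\leq 2^{n-1}q$, driven by Lemma~\ref{lem:recursion}, Lemma~\ref{lem:derivative}, the decomposition \eqref{eqn:taylor-expansion}/\eqref{eqn:expanded-taylor}, and uniqueness (Lemma~\ref{lem:existence-uniqueness}) to identify the outputs of the recursive calls. Your explicit bookkeeping for \textsf{\RightFunction} (checking that the loops governed by $t=c\bdiv 2^n$ and $r=\min(c\bmod 2^n,2^{n-1})$ enumerate exactly the overflow indices $2^ni+j\geq c$, and that untouched overflow entries already agree with $F_1+D^{2^{n-1}}F_0$ by the support structure in Lemma~\ref{lem:derivative}) is a correct elaboration of details the paper asserts without proof.
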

\begin{proof} Under the assumption that Algorithm~\ref{alg:interpolation} has been correctly implemented, we use induction to show that for all $n\in\N$, Algorithm~\ref{alg:hermite-interpolation} produces the correct output when given inputs $(a_0,\dotsc,a_{2^nq-1})\in\F^{2^nq}$ and $c\in\{1,\dotsc,2^nq\}$. Therefore, suppose that Algorithm~\ref{alg:interpolation} has been correctly implemented. Then for inputs with $n=0$, the algorithm trivially produces the correct output since Algorithm~\ref{alg:interpolation} is simply applied in this case. Let $n\geq 1$ and suppose that Algorithm~\ref{alg:hermite-interpolation} functions correctly for all inputs with smaller values of~$n$. Suppose that $(a_0,\dotsc,a_{2^nq-1})\in\F^{2^nq}$ and $c\in\{1,\dotsc,2^nq\}$ are given to the algorithm as inputs, and let $F\in\F[x]_{2^nq}$ be the corresponding polynomial for which the input requirements are satisfied. Let $F_0,F_1\in\F[x]_{2^{n-1}q}$ such that~\eqref{eqn:taylor-expansion} and, equivalently, \eqref{eqn:expanded-taylor} hold.
	
Suppose that $c>2^{n-1}q$. Then Lemma~\ref{lem:recursion} implies that~\eqref{eqn:evaluation-left} initially holds for $i\in\{0,\dotsc,2^{n-1}q-1\}$. Consequently, the induction hypothesis and Lemma~\ref{lem:existence-uniqueness} imply that the recursive call of Line~\ref{interp:large-c-left-recursion} sets $a_i$ equal to the coefficient of $x^i$ in $F_0$ for $i\in\{0,\dotsc,2^{n-1}q-1\}$. Thus, when the function \textsf{\RightFunction} is called in Line~\ref{interp:inverse-complete-right}, \eqref{eqn:expanded-taylor} implies that Lines~\ref{interp:partial-taylor-start} and~\ref{interp:partial-taylor-end} of the function set $a_{2^{n-1}q+i}$ equal to the coefficient of $x^i$ in $F_1$ for $i\in\{c-2^{n-1}q,\dotsc,2^{n-1}q-1\}$. Then Lemma~\ref{lem:derivative} implies that Lines~\ref{interp:partial-hasse-start} to~\ref{interp:partial-hasse-end} of the function set $a_{2^{n-1}q+i}$ equal to the coefficient of $x^i$ in $F_1+D^{2^{n-1}}F_0$ for $i\in\{c-2^{n-1}q,\dotsc,2^{n-1}q-1\}$. The entries $a_{2^{n-1}q},\dotsc,a_{c-1}$ are so far unchanged by the algorithm. Thus, Lemma~\ref{lem:recursion} implies that~\eqref{eqn:evaluation-right} holds for $i\in\{0,\dotsc,c-2^{n-1}q-1\}$. The induction hypothesis and Lemma~\ref{lem:existence-uniqueness} therefore imply that the recursive call of Line~\ref{interp:large-c-right-recursion} sets $a_{2^{n-1}q+i}$ equal to the coefficient of $x^i$ in $F_1+D^{2^{n-1}}F_0$ for $i\in\{0,\dotsc,c-2^{n-1}q-1\}$. Hence, after the recursive call, the left half of the vector $(a_0,\dotsc,a_{2^nq-1})$ contains the coefficients of $F_0$, while its right half contains the coefficients of $F_1+D^{2^{n-1}}F_0$. Consequently, Lemma~\ref{lem:derivative} implies that Lines~\ref{interp:hasse-start} to~\ref{interp:hasse-end} set $a_{2^{n-1}q+i}$ equal to the coefficient of $x^i$ in $F_1$ for $i\in\{0,\dotsc,2^{n-1}q-1\}$, then \eqref{eqn:expanded-taylor} implies that Lines~\ref{interp:taylor-start} to~\ref{interp:taylor-end} set $a_i$ equal to the coefficient of $x^i$ in $F$ for $i\in\{0,\dotsc,2^nq-1\}$. The algorithm stops at this point, and thus produces the correct output.	
	
Suppose now that $c\leq 2^{n-1}q$. Then~\eqref{eqn:expanded-taylor} implies that the call to \textsf{\LeftFunction} in Line~\ref{interp:inverse-complete-left} sets $a_i$ equal to the coefficient of $x^i$ in $F_0$ for $i\in\{c,\dotsc,2^{n-1}q-1\}$. This call to \textsf{\LeftFunction} does not modify $a_0,\dotsc,a_{c-1}$. Thus, Lemma~\ref{lem:recursion} implies that~\eqref{eqn:evaluation-left} holds for $i\in\{0,\dotsc,c-1\}$ when the recursive call of Line~\ref{interp:small-c-left-recursion} is made. The induction hypothesis and Lemma~\ref{lem:existence-uniqueness} therefore imply that Line~\ref{interp:small-c-left-recursion} sets $a_i$ equal to the coefficient of $x^i$ in $F_0$ for $i\in\{0,\dotsc,c-1\}$. Hence, after the recursive call, the left half of the vector $(a_0,\dotsc,a_{2^nq-1})$ contains the coefficients of $F_0$, while the entries in the right half still retain their initial values, with $a_i$ equal to the coefficient of $x^i$ in $F$ for $i\in\{2^{n-1}q,\dotsc,2^nq-1\}$. Thus, \eqref{eqn:expanded-taylor} implies that the call to $\textsf{\LeftFunction}$ in Line~\ref{interp:complete-left} sets $a_i$ equal to the coefficient of $x^i$ in $F$ for $i\in\{0,\dotsc,2^{n-1}q-1\}$. The algorithm stops at this point, and thus produces the correct output.
\end{proof}

\begin{proposition}\label{prop:interpolation-complexity} For $n\in\N$, define $\A_n,\M_n:\{1,\dotsc,2^nq\}\rightarrow\N$ as follows: $\A_n(c)$ and $\M_n(c)$ are respectively the number of additions and multiplications in~$\F$ performed by Algorithm~\ref{alg:hermite-interpolation} (for some implementation of Algorithm~\ref{alg:interpolation}) when given inputs $(a_0,\dotsc,a_{2^nq-1})\in\F^{2^nq}$ and $c\in\{1,\dotsc,2^nq\}$. Then
\begin{multline*}
	\A_n(c)
	\leq
	\A_0(q)\left(\ceil{c/q}-1\right)
	+\A_0(c\bmodstar{q})\\
	+\left(
		\frac{7}{4}\ceil{\log_2\ceil{c/q}}
		-n
		-\frac{3}{4}
	\right)
	\left(\ceil{c/q}-1\right)q
	+(2^n-1)(2q+1)
\end{multline*}
and $\M_n(c)=\M_0(q)(\ceil{c/q}-1)+\M_0(c\bmodstar{q})$ for $n\in\N$ and $c\in\{1,\dotsc,2^nq\}$.
\end{proposition}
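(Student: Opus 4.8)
The plan is to mirror the proof of Proposition~\ref{prop:evaluation-complexity}, since the interpolation algorithm has essentially the same recursive structure as the evaluation algorithm, run in reverse. First I would establish one-step recurrences for $\A_n$ and $\M_n$ by reading off the per-line costs of Algorithm~\ref{alg:hermite-interpolation} for a nonzero $n$, using the indicator $\delta_n(c)=1$ iff $c>2^{n-1}q$. The multiplicative recurrence should be identical to~\eqref{eqn:evaluation-M_n-formula-1}, because the only multiplications occur inside the base-case calls to \textsf{Interpolate}; all the reduction steps (the Hasse-derivative loops of Lines~\ref{interp:hasse-start}--\ref{interp:hasse-end} and~\ref{interp:taylor-start}--\ref{interp:taylor-end}, and the functions \textsf{\LeftFunction} and \textsf{\RightFunction}) are addition-only. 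Hence the formula $\M_n(c)=\M_0(q)(\ceil{c/q}-1)+\M_0(c\bmodstar{q})$ follows by exactly the same induction, using the two identities $\ceil{(c-\delta 2^{n-1}q)/q}=\ceil{c/q}-\delta 2^{n-1}$ and $(c-\delta 2^{n-1}q)\bmodstar{q}=c\bmodstar{q}$ already recorded in the previous proof.

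The real work is the additive count. I would carefully tally the additions performed when $c>2^{n-1}q$: in this branch the two recursive calls of Lines~\ref{interp:large-c-left-recursion} and~\ref{interp:large-c-right-recursion} contribute $\A_{n-1}(2^{n-1}q)+\A_{n-1}(c-2^{n-1}q)$, while \textsf{\RightFunction} together with the post-recursion loops of Lines~\ref{interp:hasse-start}--\ref{interp:taylor-end} must perform a total of $(3/4)2^nq$ additions, matching the forward algorithm but split differently across the recursion. When $c\leq 2^{n-1}q$, the two calls to \textsf{\LeftFunction} (Lines~\ref{interp:inverse-complete-left} and~\ref{interp:complete-left}) together perform $2^{n-1}q$ additions (since \textsf{\LeftFunction} is its own inverse and the two calls use arguments $c$ and $0$, covering complementary index ranges whose union has size $2^{n-1}q$), and the single recursive call contributes $\A_{n-1}(c)$. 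This should yield the recurrence
\begin{equation*}
	\A_n(c)
	=\A_{n-1}\left(c-\delta_n(c)2^{n-1}q\right)
	+2^{n-1}q
	+\delta_n(c)\left(\A_{n-1}\left(2^{n-1}q\right)+2^{n-1}q\right),
\end{equation*}
which differs from~\eqref{eqn:evaluation-A_n-formula-1} only in the extra additions charged to the right branch; I would double-check the constant here, as the interpolation variant performs the full $(3/4)2^nq$ Hasse/Taylor additions \emph{in addition to} whatever \textsf{\RightFunction} costs, which is the source of the larger leading coefficient $7/4$ rather than $3/4$.

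From this recurrence I would solve the full-length case first, obtaining $\A_n(2^nq)=2\A_{n-1}(2^{n-1}q)+2^nq\cdot(\text{const})$ and hence a closed form analogous to~\eqref{eqn:full-length-complexity}, then substitute back to get a two-term formula like~\eqref{eqn:evaluation-A_n-formula-2}. Expanding by the binary digits $i_0,\dotsc,i_{n-1}$ of $\ceil{c/q}-1$ as in~\eqref{eqn:binary-expansion}, I would arrive at an exact expression of the shape
\begin{equation*}
	\A_n(c)
	=\A_0(c\bmodstar{q})
	+(2^n-1)(2q+1)
	+\left(\A_0(q)+c_1 q\right)\left(\ceil{c/q}-1\right)
	+c_2\,q\sum^{n-1}_{k=0}2^k i_k k,
\end{equation*}
and finally bound the digit-sum using $\sum 2^k i_k k\leq(\ceil{\log_2\ceil{c/q}}-1)(\ceil{c/q}-1)$, exactly as in the evaluation proof. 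The main obstacle I anticipate is bookkeeping the additive contributions precisely: the interpolation algorithm spreads its additions across \textsf{\RightFunction}, the two post-recursion loops, and the pair of \textsf{\LeftFunction} calls, so I must verify that the index ranges touched are disjoint (or account for overlaps) to get the constants $7/4$, $-n$, $-3/4$, and $(2^n-1)(2q+1)$ right rather than merely up to $\bigO$. The appearance of the $+1$ inside $(2q+1)$ and the linear-in-$n$ term $-n$ signal that the per-level overhead genuinely differs from the evaluation case, so I would not simply copy constants across but re-derive each one.
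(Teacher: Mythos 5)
Your global strategy---handle $\M_n$ exactly as in Proposition~\ref{prop:evaluation-complexity}, derive a one-step recurrence for $\A_n$ via the indicator $\delta_n$, and unroll along the binary digits of $\ceil{c/q}-1$---is indeed the paper's, and your $\M_n$ argument is correct. But your additive accounting contains concrete errors. When $c>2^{n-1}q$, Lines~\ref{interp:hasse-start}--\ref{interp:taylor-end} alone already perform $(3/4)2^nq$ additions, and \textsf{\RightFunction} costs extra: exactly $\max(2^{n-1}(q+1)-c,0)+2^{n-1}(q-t)-r$ with $t=c\bdiv 2^n$ and $r=\min(c\bmod 2^n,2^{n-1})$, which is $<2^{n-1}+2^{n-1}q-c/2$ and, crucially, equals $0$ when $c=2^nq$. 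So your claim that \textsf{\RightFunction} plus the loops total $(3/4)2^nq$ (which you contradict a sentence later) is false, and your proposed full-length recurrence $\A_n(2^nq)=2\A_{n-1}(2^{n-1}q)+2^nq$ is also false: because \textsf{\RightFunction} is free at full length, $\A_n$ satisfies the \emph{same} recurrence~\eqref{eqn:full-length-recurrence} as in the evaluation case, giving $\A_n(2^nq)=2^n\left(\A_0(q)+\frac{3}{4}nq\right)$; this exact value is what gets substituted for the unrolled $\delta_n(c)\A_{n-1}(2^{n-1}q)$ term, so inflating it to $2^n\A_0(q)+n2^nq$ corrupts every level of the induction. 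Similarly, when $c\leq 2^{n-1}q$ the two \textsf{\LeftFunction} calls are not ``complementary'': the call with second argument $0$ alone performs exactly $2^{n-1}q$ additions, and the call with argument $c$ performs up to $(2^{n-1}q-c)+2^{n-1}$ more, for a total bounded by $2^{n-1}(2q+1)-c$, not $2^{n-1}q$.

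The more fundamental gap is that your plan has no mechanism to produce the $-n$ in the coefficient $\frac{7}{4}\ceil{\log_2\ceil{c/q}}-n-\frac{3}{4}$. In the paper this comes from retaining the negative, $c$-dependent part of the per-level overhead: the recurrence is
$\A_n(c)\leq\A_{n-1}\left(c-\delta_n(c)2^{n-1}q\right)+2^{n-1}(2q+1)+\delta_n(c)2^{n-1}\left(\A_0(q)+\frac{3}{4}(n-1)q+\frac{q}{2}\right)-\left(1-\frac{\delta_n(c)}{2}\right)c$,
and after unrolling one must bound the resulting sum $\sum^{n-1}_{k=0}\left(1-\frac{i_k}{2}\right)\left(c-q\sum^{n-1}_{j=k+1}2^ji_j\right)$ from \emph{below} by $\left(n-\ceil{\log_2\ceil{c/q}}+\frac{1}{2}\right)\left(\ceil{c/q}-1\right)q$; this subtraction converts the evaluation-style coefficient $\frac{3}{4}\ceil{\log_2\ceil{c/q}}-\frac{1}{4}$ into the stated $\frac{7}{4}\ceil{\log_2\ceil{c/q}}-n-\frac{3}{4}$. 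Your template closed form, $\A_0(c\bmodstar q)+(2^n-1)(2q+1)+(\A_0(q)+c_1q)\left(\ceil{c/q}-1\right)+c_2q\sum_k 2^ki_kk$ with the digit sum bounded only from above, depends on $n$ solely through $(2^n-1)(2q+1)$, so no choice of constants $c_1,c_2$ can yield a term $-n\left(\ceil{c/q}-1\right)q$. You correctly flagged that the constants must be re-derived, but executed as written your argument proves at best a weaker bound; the missing idea is the $-\left(1-\delta_n(c)/2\right)c$ saving per level and its lower-bounded sum over the levels where $\delta_n(c)=0$.
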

\begin{proof} The proof of the bound on $A_n(c)$ follows along similar lines to 
that of Proposition~\ref{prop:evaluation-complexity}, but with the addition of 
having to bound the number of additions performed in 
Lines~\ref{interp:inverse-complete-right} 
and~\ref{interp:inverse-complete-left} of the algorithm. As no multiplications 
are performed by either of these lines, they may be ignored when proving the 
formula for $\M_n(c)$. In doing so, the proof follows along identical lines to 
that of Proposition~\ref{prop:evaluation-complexity}, and is therefore omitted.
	
Suppose that Algorithm~\ref{alg:hermite-interpolation} has been given inputs $(a_0,\dotsc,a_{2^nq-1})\in\F^{2^nq}$ and $c\in\{1,\dotsc,2^nq\}$ for some nonzero $n\in\N$. Let $t=c\bdiv{2^n}$ and $r=\min(c\bmod{2^n},2^{n-1})$, as defined in Line~\ref{interp:partial-hasse-start} of the function \textsf{\RightFunction}. If $c>2^{n-1}q$, then the call to \textsf{\RightFunction} in Line~\ref{interp:inverse-complete-right} of the Algorithm~\ref{alg:hermite-interpolation} performs
\begin{align*}
	\max(2^{n-1}(q+1)-c,0)+\left(2^{n-1}(q-t)-r\right)
	<2^{n-1}+\left(2^{n-1}q-c/2\right)
\end{align*}
additions. In particular, no additions are performed if $c=2^nq$. Consequently, $\A_n$ once again satisfies the recurrence~\eqref{eqn:full-length-recurrence} for nonzero $n\in\N$, and, as a result, also satisfies~\eqref{eqn:full-length-complexity}. If $c\leq 2^{n-1}q$, then Line~\ref{interp:inverse-complete-left} of Algorithm~\ref{alg:hermite-interpolation} performs at most $(2^{n-1}q-c)+(2^n-2^{n-1})$ additions. By summing the contributions of each line of Algorithm~\ref{alg:hermite-interpolation} in the manner of the proof of Proposition~\ref{prop:evaluation-complexity}, with the two bounds used for the contributions of Lines~\ref{interp:inverse-complete-right} and~\ref{interp:inverse-complete-left},  it follows that
\begin{multline*}
	A_n(c)	
	\leq
	A_{n-1}\left(c-\delta_n(c)2^{n-1}q\right)
	+2^{n-1}\left(2q+1\right)\\
	+\delta_n(c)2^{n-1}\left(\A_0(q)+\frac{3}{4}(n-1)q+\frac{q}{2}\right)
	-\left(1-\frac{\delta_n(c)}{2}\right)c
\end{multline*}
for nonzero $n\in\N$ and $c\in\{1,\dotsc,2^nq\}$, where $\delta_n$ is the indicator function defined in the proof of Proposition~\ref{prop:evaluation-complexity}. Therefore,
\begin{multline*}
	\A_n(c)	
	\leq
	\A_0(q)\left(\ceil{c/q}-1\right)
	+\A_0\left(c\bmodstar{q}\right)
	+(2^n-1)(2q+1)\\
	+\left(
		\frac{3}{4}\ceil{\log_2\ceil{c/q}}
		-\frac{1}{4}
	\right)
	\left(\ceil{c/q}-1\right)q
	-\sum^{n-1}_{k=0}
	\left(1-\frac{i_k}{2}\right)
	\Bigg(c-q\sum^{n-1}_{j=k+1}2^ji_j\Bigg)
\end{multline*}
for $n\in\N$ and $c\in\{1,\dotsc,2^nq\}$, where $i_0,\dotsc,i_{n-1}\in\{0,1\}$ are the coefficients of the binary expansion~\eqref{eqn:binary-expansion}. The upper bound on $\A_n(c)$ stated in the proposition is then obtained by observing that
\begin{align*}
	\sum^{n-1}_{k=0}
	\left(1-\frac{i_k}{2}\right)
	\Bigg(c-q\sum^{n-1}_{j=k+1}2^ji_j\Bigg)
	&\geq
	\sum^{n-1}_{k=\ceil{\log_2\ceil{c/q}}}c
	+\sum^{\ceil{\log_2\ceil{c/q}}-1}_{k=\max(\ceil{\log_2\ceil{c/q}}-1,0)}
	\frac{c}{2}\\
	&\geq
	\left(n-\ceil{\log_2\ceil{c/q}}+\frac{1}{2}\right)
	\left(\ceil{c/q}-1\right)q
\end{align*}
for $n\in\N$, $c\in\{1,\dotsc,2^nq\}$ and $i_0,\dotsc,i_{n-1}\in\{0,1\}$ such that~\eqref{eqn:binary-expansion} holds.
\end{proof}

By taking $\A_0$ and $\M_0$ to be in $\bigO(\mult(q)+\mult(c)\log c)$, it follows from Proposition~\ref{prop:interpolation-complexity} that the length $\ell$ Hermite interpolation problem can be solved with $\bigO(\mult(\ell)\log q+\ell\log\ceil{\ell/q})$ operations in $\F$ by Algorithm~\ref{alg:hermite-interpolation}. The number of additions performed by the algorithm in this setting may once again be reduced by taking into account the zeros that initially occupy the $2^nq-\ell$ rightmost entries of the vector $(a_0,\dotsc,a_{2^nq-1})$. Moreover, as some of these entries are changed during the course of the algorithm, but ultimately are equal to zero again at its end, it is possible to save further additions by not performing those steps specific to restoring the entries to zero.

%-----------------------------------------
% Bibliography
%-----------------------------------------
\bibliographystyle{amsplain}

\end{document}